\newtheorem{thm}{Theorem}[section]
\newtheorem{lem}[thm]{Lemma}
\newtheorem{prop}[thm]{Proposition}
\newcommand{\ind}{{\bf 1}}
\newcommand{\PP}{{\mathbb P}}
\newcommand{\RR}{{\mathbb R}}
\newcommand{\NN}{{\mathbb N}}
\newcommand{\EE}{{\mathbb E}}
\newcommand{\E}{{\mathbb E}}
\newcommand{\nn}{\mbox{NN}}
\newcommand{\tr}{\mbox{tr}}
\newcommand{\inv}{^{-1}}
\def\calA{{\cal A}}
\def\calG{{\cal G}}
\newcommand{\gi}{{[g]}}
\newcommand{\argmin}{\operatornamewithlimits{argmin}}
\def\b#1{\left(#1\right)}
\def\bb#1{\left\{#1\right\}}
\def\bbb#1{\left[#1\right]}
\def\n#1{\left|#1\right|}
\def\nn#1{\left\|#1\right\|}
\begin{document}

\title{Network Granger Causality with Inherent Grouping Structure}
\date{}
\author{Sumanta Basu\\ Department of Statistics, University of Michigan
        \and Ali Shojaie\\ Department of Biostatistics, University of Washington
        \and George Michailidis\\ Department of Statistics, University of Michigan}


\maketitle

\begin{abstract}
The problem of estimating high-dimensional network models arises naturally in the analysis of many biological and socio-economic systems. 
In this work, we aim to learn a network structure from temporal panel data, employing the framework of Granger causal models under the assumptions of sparsity of its edges and inherent grouping structure among its nodes. To that end, we introduce a group lasso regression regularization
framework, and also examine a thresholded variant to address the issue of group misspecification. Further, the norm consistency
and variable selection consistency of the estimates are established, the latter under the novel concept of direction consistency. 
The performance of the proposed methodology is assessed through an extensive set of simulation studies and comparisons with existing techniques.
The study is illustrated on two motivating examples coming from functional genomics and financial econometrics.
\end{abstract}

\section{Introduction}

We consider the problem of learning a directed network of interactions among a number of entities from time course data. A natural framework to analyze this problem uses the notion of Granger causality \cite{granger69}. Originally proposed by C.W. Granger this notion provides a statistical framework for determining whether a time series $X$ is useful in forecasting another one $Y$, through a series of statistical tests. It has found wide applicability in economics, including testing relationships between money and income \citep{sims1972money}, government spending and taxes on economic output \citep{blanchard2002empirical}, stock price and volume \citep{hiemstra1994testing}, etc. More recently the Granger causal framework has found diverse applications in biological sciences including functional genomics, systems biology and neurosciences to understand the structure of gene regulation, protein-protein interactions and brain circuitry, respectively.  

It should be noted that the concept of Granger causality is based on associations between time series, and only under very stringent conditions, true causal relationships can be inferred \citep{pearl2000causality}. Nonetheless, this framework provides a powerful tool for understanding the interactions among random variables based on time course data.

Network Granger causality (NGC) extends the notion of Granger causality among two variables to a wider class of $p$ variables. Such extensions involving multiple time series are handled through the analysis of vector autoregressive processes (VAR) \citep{lutkepohl2005new}. Specifically, for $p$ stationary time series $X_1^t, \ldots, X_p^t$, with $\mathbf{X^t} = (X_1^t, \ldots, X_p^t)'$, one considers the class of models
\begin{equation}\label{NGCdefn}
	\mathbf{X^t} = A^1 \mathbf{X^{t-1}} + \ldots + A^d \mathbf{X^{t-d}} + \mathbf{\epsilon^t},
\end{equation}
with $d$ being the \textit{ unknown} order of the VAR model and the innovation process satisfying $\mathbf{\epsilon^T} \sim N(0, \sigma^2 I)$. We call $A^1, \ldots, A^d$ the adjacency matrices from lags $1, \ldots, d$.  In this model, the time series $\{X^t_j\}$ is said to be Granger causal for the time series $\{X^t_i\}$ if $A^h_{i,j} \neq 0$ for some $h = 1, \ldots, d$.  Equivalently we can say that there exists an edge $X^{t-h}_j \rightarrow X^t_i$ in the underlying network model comprising of $(d+1) \times p$ nodes (see Figure~\ref{GGCdemo}).
\begin{figure}[h]
\begin{center}
\includegraphics[scale = 0.5, clip = TRUE, trim = 1.3in 1.5in 2in 1.2in]{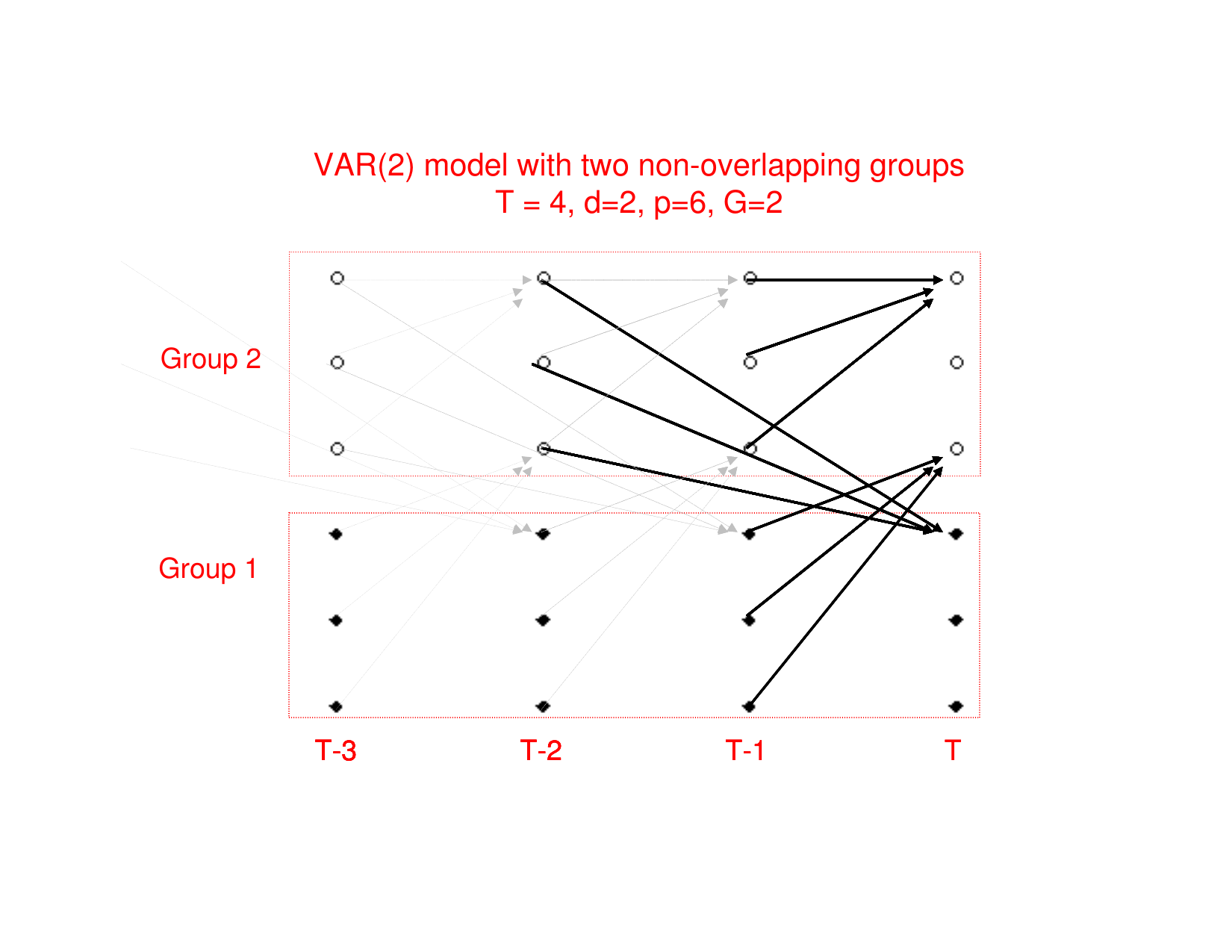}
\caption{An example of a Network Granger causal model with two non-overlapping groups observed over T = 4 time points}\label{GGCdemo}
\end{center}
\end{figure}
Note that the temporal structure induces a natural partial order among the nodes of this network, which in turn simplifies significantly the corresponding estimation problem \citep{alipendag10} of a directed acyclic graph. Nevertheless, one still has to deal with estimating a high-dimensional network (e.g. hundreds of genes) from a limited number of samples.

The traditional asymptotic framework of estimating VAR models requires observing a long, stationary realization $\{X^1, \ldots, X^T, \, T \rightarrow \infty \, p, \, d \mbox{ fixed}\}$ of the $p$-dimensional time series. This is not appropriate in many biological applications for the following reasons. First, long stationary time series are rarely observed in these contexts. Second, the number of time series ($p$) being large compared to $T$, the task of consistent order ($d$) selection using standard criteria (e.g., AIC or BIC) becomes challenging. 
Similar issues arise in many econometric applications where empirical evidence suggests lack of stationarity over a long time horizon, although the multivariate time series exhibits locally stable distributional properties. 

A more suitable framework comes from the study of panel data, where one observes several replicates of the time series, with possibly short $T$, across a panel of $n$ subjects. In biological applications replicates are obtained from test subjects. In the analysis of macroeconomic variables,  households or firms typically serve as replicates. After removing panel specific fixed effects one treats the replicates as independent samples, performs regression analysis under the assumption of common slope structure and studies the asymptotics under the regime $n \rightarrow \infty$.
Recent works of \citet{CaoSunJE2011} and \citet{binder2005} analyze theoretical properties of short panel VARs in the low-dimensional setting ($n \rightarrow \infty, \, T,\, p$ fixed).

The focus of this work is on estimating a {\textit{ high-dimensional NGC model}} in the panel data context ($p, \, n$ large, $T$ small to moderate). This work is motivated by two application domains, functional genomics and financial econometrics.
 In the first application (presented in Section ~\ref{secdata}) one is interested in reconstructing a gene regulatory network structure from time course data, a canonical problem in functional genomics \citep{michailidis2012}. The second motivating example examines the composition of balance sheets of the $n = 50$ largest US banks by size, over $T = 9$ quarterly periods, which provides insight into their risk profile. 

The nature of high-dimensionality in these two examples comes from both estimation of $p^2$ coefficients for each of the adjacency matrices $A^1, \ldots, A^d$, but also from the fact that the order of the time series $d$ is often unknown. Thus, in practice, one must either ``guess'' the order of the time series (often times, it is assumed that the data is generated from a VAR(1) model, which can result in significant loss of information), or include all of the past time points, resulting in significant increase in the number of variables in cases where $d \ll T$.
Thus, efficient estimation of the order of the time series becomes crucial.

Latent variable based dimension reduction techniques like principal component analysis or factor models are not very useful in this context since our goal is to reconstruct a network among the observed variables. To achieve dimension reduction we impose a group sparsity assumption on the structure of the adjacency matrices $A_1, \ldots, A_d$. In many applications, structural grouping information about the variables exists. For example, genes can be naturally grouped according to their function or chromosomal location, stocks according to their industry sectors, assets/liabilities according to their class, etc. This information can be incorporated to the Granger causality framework through a group lasso penalty. If the group specification is correct it enables estimation of denser networks with limited sample sizes  \citep{bach08, huangzhang10, lounici2011}. However, the group lasso penalty can achieve model selection consistency only at a group level. In other words, if the groups are misspecified, this procedure can not perform within group variable selection \citep{grpbridge09}, an important feature in many applications.

Over the past few years, several authors have adopted the framework of network Granger causality to analyze multivariate temporal data. For example, \citet{fujitaetal07} and \citet{lozanoetal09} employed NGC models coupled with penalized $\ell_1$ regression methods to learn gene regulatory mechanisms from time course microarray data.
Specifically, \citet{lozanoetal09} proposed to group all the past observations, using a variant of group lasso penalty, in order to construct a relatively simple Granger network model. This penalty takes into account the average effect of the covariates over different time lags and connects Granger causality to this average effect being significant. However, it suffers from significant loss of information and makes the consistent estimation of the signs of the edges difficult (due to averaging). \citet{alitrunc} proposed a truncating lasso approach by introducing a truncation factor in the penalty term, which strongly penalizes the edges from a particular time lag, if it corresponds to a highly sparse adjacency matrix.

Despite recent use of NGC in applications involving high dimensional data, theoretical properties of the resulting estimators have not been fully investigated. For example, \citet{lozanoetal09} and \citet{alitrunc} discuss asymptotic properties of the resulting estimators, but neither address in depth norm consistency properties, nor do they examine under what vector autoregressive structures the obtained results hold. 

In this paper, we develop a general framework that accommodates different variants of group lasso penalties for NGC models. It allows for the simultaneous estimation of the order of the times series and the Granger causal effects; further, it allows for variable selection even when the groups are misspecified. In summary, the key contributions of this work are:
(i) investigate in depth \textit{sufficient conditions} that explicitly take into consideration the structure of the VAR$(d)$ model to establish norm  consistency,
(ii) introduce the novel notion of \textit{direction consistency}, which generalizes the concept of sign consistency and provides insight into the properties of group lasso estimates within a group, and (iii) use the latter notion to introduce an easy to compute thresholded variant of group lasso, that performs within group variable selection in addition to group sparsity pattern selection {even when the group structure is misspecified}. 

All the obtained results are non-asymptotic in nature, which help provide insight into the properties of the estimates under different asymptotic regimes arising from varying growth rates of $T, p, n$, group sizes and the number of groups.

\section{Model and Framework}\label{secmodel}
\textbf{Notation. } Consider a VAR model
\begin{equation}\label{eqn1:NGCdefn}
\underbrace{\mathbf{X}^t}_{p \times 1} = \underbrace{A^1}_{p \times p} \mathbf{X}^{t-1} + \ldots + A^d \mathbf{X}^{t-d} + \mathbf{\epsilon}^t,~~\epsilon^t \sim N(\mathbf{0}, \sigma^2 \mathbf{I}_{p \times p})
\end{equation}
observed over $T$ time points $t = 1, \ldots, T$, across $n$ panels. 
The index set of the variables $\mathbb{N}_p = \{ 1, 2, \ldots, p\}$ can be partitioned into $G$ non-overlapping groups $\mathcal{G}_g$, i.e., $\mathbb{N}_p = \cup_{g=1}^G \mathcal{G}_g$ and $\mathcal{G}_g \cap \mathcal{G}_{g'} = \phi$ if $g \ne g'$. Also $k_g = |\mathcal{G}_g|$  denotes the size of the $g^{th}$ group with $k_{max} = \displaystyle \max_{1 \le g \le G} k_g$.

For any matrix $A$, we denote the $i^{th}$ row by $A_{i:}$, $j^{th}$ column by $A_{:j}$ and the collection of rows (columns)  corresponding to the $g^{th}$ group by $A_{[g]:}$ ($A_{:[g]}$). The transpose of a matrix $A$ is denoted by $A'$ and its Frobenius norm by $|| A ||_{F}$. The symbol $A^{1:T}$ is used to denote the concatenated matrix $\left[A^1: \cdots: A^T \right]$. For any matrix or vector $D$, $\| D \|_0$ denotes the number of non-zero coordinates in $D$.  For notational convenience, we reserve the symbol $\|. \|$ to denote the $\ell_2$ norm of a vector and/or the spectral norm of a matrix. Any other norm will be indexed explicitly (e.g., $\|.\|_1,~\|.\|_{2, 2},~\|.\|_{2, \infty}$) to avoid confusion. Also for any vector $\beta$, we use $\beta_j$ to denote  its $j^{th}$ coordinate and  $\beta_{[g]}$ to denote the coordinates corresponding to the $g^{th}$ group.\\ \\
\textbf{Network Granger causal (NGC) estimates with group sparsity. }
Consider $n$  replicates from the NGC model~(\ref{eqn1:NGCdefn}), and denote the $n \times p$ observation matrix at time $t$ by $\mathcal{X}^t$. In econometric applications the data on $p$ economic variables across $n$ panels (firms, households etc.) can be observed over $T$ time points. For time course microarray data one typically observes the expression levels of $p$ genes across $n$ subjects over $T$ time points. After removing the panel specific fixed effects one assumes the common slope structure and independence across the panels. The data is high-dimensional if either $T$ or $p$ is large compared to $n$. In such a scenario, we assume the existence of an underlying group sparse structure, i.e.,  the support of each row of $A^{1:T} = \left[A^1: \cdots: A^T \right]$ in the model (\ref{eqn1:NGCdefn}) can be covered by a small number of groups $s$, where $s \ll (T-1)G$. Note that the groups can be misspecified in the sense that the coordinates of a group covering the support need not be all non-zero. Hence, for a properly specified group structure we shall expect $s \ll \| A^{1:T}_{i:} \|_0$. On the contrary, with many misspecified groups, $s$ can be of the same order, or even larger than $\| A^{1:T}_{i:} \|_0$.

Learning the true network of Granger causal effects $\{ (i, j) \in \{1, \ldots, p\}: A^t_{ij} \neq 0 \mbox{ for some } t \}$ is equivalent to recovering the correct sparsity pattern in $A^{1:(T-1)}$ and consistently estimating the non-zero effects $A^t_{ij}$. In the high-dimensional regression problems this is achieved by simulteneous regualrization and selection operators like lasso and group lasso. The group Granger causal estimates of the adjacency matrices $A^1, \ldots, A^{T-1}$  are obtained by solving the following optimization problem
\begin{equation}\label{eqn:NGCest_original}
\hat{A}^{1:T-1} = \displaystyle \argmin_{A^1,\cdots, A^{T-1}} \frac{1}{2n} \left \| \mathcal{X}^T - \displaystyle \sum_{t=1}^{T-1} \mathcal{X}^{T-t} \left( A^t \right)' \right \|_F^2 
			+ \lambda \displaystyle \sum_{t=1}^{T-1}\displaystyle \sum_{i=1}^p  \displaystyle \sum_{g=1}^G  w^{t}_{i, g}\| A^t_{i:[g]} \|  
\end{equation}
where $\mathcal{X}^t$ is the $n \times p$ observation matrix at time $t$, constructed by stacking $n$ replicates from the model (\ref{eqn1:NGCdefn}), $w^t$ is a $p \times G$ matrix of suitably chosen weights and $\lambda$ is a common regularization parameter. The optimization problem can be separated into the following $p$  penalized regression problems:
\begin{equation}\label{eqn:NGCest}
\hat{A}^{1:T-1}_{i:} =  \displaystyle \argmin_{\theta^1, \cdots, \theta^{T-1} \in \mathbb{R}^{p }}  \frac{1}{2n} \| \mathcal{X}^T_{:i} - \displaystyle \sum_{t=1}^{T-1} \mathcal{X}^{T-t} \theta^t \|^2 
			+ \lambda \displaystyle \sum_{t=1}^{T-1} \displaystyle \sum_{g=1}^G  w^{t}_{i, g}\| A^t_{i:[g]} \| , \ \ i=1,\cdots,p
\end{equation}
The order $d$ of the VAR model is estimated as $\hat{d} = \displaystyle \max_{1 \le t \le T-1} \{t: \hat{A}^t \neq \mathbf{0} \} $.

Different choices of weights $w^t_{i:g}$ lead to different variants of NGC estimates.  The regular NGC estimates correspond to the choices $w^t_{i, g} =1$ or $\sqrt{k_g}$, while for adaptive group NGC estimates the weights are chosen as $w^t_{i, g} =\left\| \hat{A}^t_{i:[g]} \right\|^{-1}$, where $\hat{A}^t$ are obtained from a regular NGC  estimation.

Thresholded NGC estimates are calculated by a two-stage procedure. The first stage involves a regular NGC estimation procedure. The second stage uses a bi-level thresholding strategy on the estimates $\hat{A}^t$. First, the estimated groups with $\ell_2$ norm less than a threshold ($\delta_{grp} = c \lambda, \, c>0$)  are set to zero. The second level of thresholding (within group) is applied if the \textit{a priori} available grouping information is not entirely reliable. $\hat{A}^t_{ij}$within an estimated group $\hat{A}^t_{i:[g]}$ is thresholded to zero if $\left|\hat{A}^t_{ij}\right| / \left\| \hat{A}^t_{i:[g]} \right\|$ is less than a threshold $\delta_{misspec} \in (0, 1)$. So, for every $t=1, \ldots, T-1$, if $j \in \mathcal{G}_g$, the thresholded NGC estimates are
\begin{equation*}
 \tilde{A}^t_{ij} = \hat{A}_{ij}^t I\left\{\left|\hat{A}^t_{ij}\right| \ge \delta_{misspec} \left\| \hat{A}^t_{i:[g]} \right\| \right\} I \left\{ \left\| \hat{A}^t_{i:[g]} \right\| \ge \delta_{grp} \right\} \label{thres_defn}
\end{equation*}
The tuning parameters $\lambda_{grp}$ and $\delta_{misspec}$ are chosen via cross-validation. The rationale behind this thresholding strategy is discussed in Section \ref{sec:selection}.


\section{Estimation Consistency of NGC estimates}\label{secresults}
The regular NGC estimates in (\ref{eqn:NGCest_original}) are obtained by solving $p$ separate group lasso programs with a common design matrix $\mathbf{X}_{n \times {p(T-1)}} = [ \mathcal{X}^1: \cdots: \mathcal{X}^{T-1}]$. This design matrix has $\bar{p} = (T-1)p$ columns which can be partitioned into $\bar{G} = (T-1)G$ groups $\left\{ \mathcal{G}_1, \ldots, \mathcal{G}_{\bar{G}}  \right\}$. We denote the sample Gram matrix by $C = \mathcal{X}' \mathcal{X} / n$. For the $i^{th}$ optimization problem, these $\bar{G}$ groups are penalized by $\lambda_g := \lambda \, w^t_{i,g}$.  Following \citet{lounici2011} the $\ell_2$ estimation error of the NGC estimates $\hat{A}^t$  can be shown to be bounded under certain restricted eigenvalue (RE) assumptions. 
These assumptions are common in the literature of high-dimensional regression \citep{lounici2011, bickel2009simultaneous, vandegeerconditions} and are known to be sufficient to guarantee consistent estimation of the regression coefficients even when the design matrix is singular. Of main interest, however, is to investigate the validity of these assumptions in the context of NGC models. This issue is addressed in Proposition \ref{spectralresult}. 

For $L > 0$, we say that a \textbf{Restricted Eigenvalue} (RE) assumption RE(s, L) is satisfied if there exists a positive number $\phi_{RE} = \phi_{RE}(s) > 0$ such that
\begin{equation}\label{RElounici}
\hspace{-0.1in}\min_{\begin{smallmatrix}J \subset \mathbb{N}_{\bar{G}},\, |J| \le s\\ \Delta \in \mathbb{R}^{\bar{p}} \backslash\{\mathbf{0}\} \end{smallmatrix}}  \left \{ \frac{\| \mathbf{X} \Delta\| }{\sqrt{n} \| \Delta_{[J]} \| } : \displaystyle \sum_{g \in J^c} \lambda_g \| \Delta_{[g]}\| \le L \displaystyle \sum_{g \in J} \lambda_g \| \Delta_{[g]} \| \right \} \ge \phi_{RE}
\end{equation}

The following theorem provides the convergence rate of the group NGC estimates under RE assumptions. The proof follows along the lines of   \citet{lounici2011} and is delegated to Appendix \ref{app_l2}.
\begin{thm}\label{thm:l2consistency}
Consider a regular NGC estimation problem \eqref{eqn:NGCest}. Let $\left\{ \mathcal{G}_{J^i}:~J^i \subseteq \mathbb{N}_{\bar{G}}\right\}$ denote the groups required to cover $support(A^{1:d}_{i:})$ and let $s = \max_{1 \le i \le p} \left| J^i\right|$. Suppose $\lambda, ~w^t_{i,g}$ in \eqref{eqn:NGCest_original} are chosen large enough such that  for some $\alpha > 0$,
\begin{eqnarray}
\lambda_g  \ge \frac{2 \sigma}{\sqrt{n}} \sqrt{\nn{C_{[g] [g]}}} \left(  \sqrt{k_g} + \frac{\pi}{\sqrt{2}} \sqrt{\alpha \, \log\,\bar{G}} \right)~~\mbox{for every $g \in \mathbb{N}_{\bar{G}}$,}
\end{eqnarray}
Also assume that the common design matrix $\mathbf{X} = [ \mathcal{X}^1: \cdots: \mathcal{X}^{T-1}]$ in the $p$ regression problems \eqref{eqn:NGCest}  satisfy $RE(2s, 3)$. 
Then, with probability at least $1 - 2 \bar{p} \bar{G}^{1-\alpha}$, 
\begin{equation}
\left\| \hat{A}^{1:\hat{d}} - A^{1:d} \right\|_F \le \frac{4\sqrt{10}}{\phi_{RE}^2 (2s)}  \, \frac{\max_{i} ~\sum_{g \in J^i} \lambda_g^2}{\lambda_{min} \, \sqrt{s }}
\end{equation}
\end{thm}

Note that group lasso achieves faster convergence rate (in terms of estimation and prediction error) than lasso if the groups are appropriately specified. For example, if all the groups are of equal size $k$ and $\lambda_g = \lambda$ for all $g$, then group lasso can achieve an $\ell_2$ estimation error of order $O\left(\sqrt{s}(\sqrt{k} + \sqrt{\log\,\bar{G}})/\sqrt{n}\right)$. In contrast, lasso's error is known to be of the order  $O\left(\sqrt{\| A^{1:d} \|_0 \, \log\,\bar{p}/n} \right)$, which establishes that group lasso has a lower error bound if $s \ll \| A^{1:d}_{i:} \|_0$. On the other hand, lasso will have a lower error bound if $s \asymp \| A^{1:d}_{i:} \|_0$, i.e., if the groups are highly misspecified.
\vspace{0.1in}

\textbf{Validity of RE assumption in Group NGC problems. }
In view of Theorem \ref{thm:l2consistency}, it is important to understand how stringent the RE condition is in the context of NGC problems. It is also important to find a lower bound on the RE coefficient $\phi_{RE}$, as it affects the convergence rate of the NGC estimates. For the panel-VAR setting, we can rigorously establish that the RE condition holds with overwhelming probability, as long as $n,~p$ grow at the same rate required for $\ell_2$-consistency. 

The following proposition achieves this objective in two steps. Note that each row of the design matrix $\mathbf{X}$ (common across the $p$ regressions) is independently distributed as $N(\mathbf{0}, \Sigma)$ where $\Sigma = Var(\mathbf{X}^{1:(T-1)})$. First, we exploit the spectral representation of the stationary VAR process to provide a lower bound on the minimum eigenvalue of $\Sigma$. An argument along the lines in \citet{raskutti2010REcorrgauss} then establishes that $\mathbf{X}$ satisfies the RE condition with overwhelming probability for sufficiently 
large $n$.
\begin{prop}\label{spectralresult}
(a)~~	Suppose the VAR(d) model of \eqref{eqn1:NGCdefn} is stable, stationary. Let $\Sigma = Var(\mathbf{X}^{1:(T-1)})$.  Then, $\Lambda_{min}(\Sigma) \ge m^2 := {\sigma^2}\left[ 1 + \frac{1}{2}(\mathbf{v}_{in} + \mathbf{v}_{out}) \right]^{-2}$, where $\mathbf{v}_{in}$ and $\mathbf{v}_{out}$ are the maximum incoming and outgoing effects at a node, i.e.,
\begin{equation*}
\mathbf{v}_{in} = \displaystyle \max _{1 \le i \le p} \displaystyle \sum_{t=1}^d \displaystyle \sum_{j=1}^p |A^t_{ij}| , ~~~~~~ \mathbf{v}_{out} = \displaystyle \max _{1 \le j \le p} \displaystyle \sum_{t=1}^d \displaystyle \sum_{i=1}^p |A^t_{ij}| 
\end{equation*}
(b)~~ In addition, suppose the replicates from the model \eqref{eqn1:NGCdefn} are i.i.d. Then, for any $s > 0$ there exist universal positive constants $c, c', c''$, such that if the sample size $n$ satisfies
\begin{equation*}
n > c'' \frac{16 \rho^2(\Sigma)}{m^2} \left( \frac{s (\sqrt{\log\,\bar{G}} + \sqrt{k_{max}})^2}{\lambda_{min}/\lambda_{max}} \right),
\mbox{~~ where ~} \rho^2(\Sigma) := \displaystyle \max_{1 \le g \le \bar{G}} \left\| \Sigma_{[g][g]} \right\|,
\end{equation*}
then $\mathbf{X}$ satisfies $RE(s, 3)$ with $\phi^2_{RE} \ge m^2/ 8$ with probability at least $1 - c' \exp(-cn)$.
\end{prop}
\textbf{Remark. } Proposition \ref{spectralresult} has two interesting consequences. First, it provides a lower bound on the RE constant $\phi_{RE}$ which is independent of $T$. So if the high dimensionality in the Granger causal network arises only from the time domain and not the cross-section ($T \rightarrow \infty,~ p,~G \mbox{~fixed}$), the stationarity of the VAR process guarantees that the rate of convergence depends only on the true order ($d$), and not $T$. Second, this result shows that the NGC estmates are consistent even if the node capacities $\mathbf{v}_{in}$ and $\mathbf{v}_{out}$ grow with $n,~p$ at an appropriate rate.

\section{Variable Selection Consistency of NGC estimates}\label{sec:selection}
In view of (\ref{eqn:NGCest}), to study the variable selection properties of NGC estimates it suffices to analyze the variable selection properties of $p$ generic group lasso estimates. The problem of group sparsity selection has been thoroughly investigated in the literature \citep{ adaptivegrplassofei, lounici2011}. The issue of selection and sign consistency within a group, however, is still unclear. Since group lasso does not impose sparsity within a group, all the group members are selected together \citep{grpbridge09} and it is not clear which ones are recovered with correct signs. This also leads to inconsistent variable selection if a group is misspecified. Several alternate penalized regression procedures have been proposed to overcome this shortcoming \citep{grpreg09,grpbridge09}. 
 The main idea behind these procedures is to somehow combine $\ell_2$ and $\ell_1$ norms in the penalty to encourage sparsity at both group and variable level. These estimators involve nonconvex optimization problems and are computationally expensive. Also their theoretical properties in a high dimensional regime are not well studied. 

We take a different approach to deal with the issue of group misspecification. Although group lasso penalty does not perform exact variable selection within groups, it performs regularization and shrinks the individual coefficients. We  utilize this regularization to detect misspecification within a group.
 To this end, we formulate a generalized notion of sign consistency, henceforth referred as ``direction consistency'', that provides insight into the properties of group lasso estimates within a single group. Subsequently, these properties are used to develop a   simple, easy to compute, 
thresholded variant of  group lasso which, in addition to group selection, achieves variable selection and sign consistency within groups.

Consider a generic group lasso regression problem of the linear model $y = X \beta^0 + \epsilon$ with $p$ variables partitioned into $G$ non-overlapping groups:
\begin{eqnarray}
\hat{\beta} =  \displaystyle \argmin_{\beta \in \mathbb{R}^p} \frac{1}{2n} \| \mathbf{Y} - \mathbf{X} \mathbf{\beta} \|^2 +  \displaystyle \sum_{g=1}^{\bar{G}} \lambda_g \|  \mathbf{\beta}_{[g]}\| \label{genericgrplasso} \\
\underbrace{\mathbf{\beta}^0}_{p \times 1} = [\underbrace{\mathbf{\beta}^0_{[1]}, \ldots, \mathbf{\beta}^0_{[s]}}_{k_1 + \ldots+k_s = q}, \underbrace{\mathbf{0}, \ldots, \mathbf{0}}_{p-q}] = [\mathbf{\beta}^0_{(1)}: \mathbf{\beta}^0_{(2)}]\\
\underbrace{\mathbf{X}}_{n \times p} = [\underbrace{\mathbf{X}_{(1)}}_{n \times q} : \underbrace{\mathbf{X}_{(2)}}_{n \times (p-q)}]  ~~~~~~~~C = \frac{1}{n} \mathbf{X}' \mathbf{X} = \left[ \begin{array} {cc}C_{11} & C_{12} \\ C_{21} & C_{22} \\	\end{array} \right]\label{signonsigpartn}
\end{eqnarray}


\textbf{Direction Consistency. }
For an $m$-dimensional vector $\mathbf{\tau} \in \mathbb{R}^m \backslash \{\mathbf{0} \}$ define its direction vector $D(\tau) = {\mathbf{\tau}}/{\| \mathbf{\tau} \|}$ , $D(\mathbf{0}) = \mathbf{0}$. In the context of a generic group lasso regression \eqref{signonsigpartn}, for a group $g \in S$ of size $k_g$, $D(\mathbf{\beta}^0_{[g]})$ indicates the direction of influence of $\mathbf{\beta}^0_{[g]}$ at a group level in the sense that it reflects the relative importance of the influential members within the group. Note that for $k_g=1$ the function $D(\cdot)$ simplifies to the usual $sgn(\cdot)$ function.\\ \\
\textbf{Definition. } An estimate $\mathbf{\hat{\beta}}$ of a generic group lasso problem \eqref{genericgrplasso} is \textbf{\textit{direction consistent}} at a rate $\delta_n$, if there exists a sequence of positive real numbers $\delta_n \rightarrow 0$ such that
\begin{equation}
\mathbb{P} \left(\| D(\mathbf{\hat{\beta}}_{[g]}) - D(\mathbf{\beta}^0_{[g]}) \| < \delta_n,~\forall g \in S, ~ \hat{\beta}_{[g]} = \mathbf{0}, ~\forall g \notin S \right) \rightarrow 1 \mbox{ as } n, p \rightarrow \infty.
\end{equation}
Now suppose $\mathbf{\hat{\beta}}$ is a direction consistent estimator.  Consider the set 
$\tilde{S}^n_{g} := \{ j \in \mathcal{G}_g:\,{| \mathbf{{\beta^0}}_j |}\, / \, {\|\mathbf{{\beta^0}}_{[g]} \|} > \delta_n \}$. $\tilde{S}^n_{g}$ can be viewed as a collection of influential group members within a group $\mathcal{G}_g$, which are ``detectable'' with a sample size of $n$. Then, it readily follows from the definition that 
\begin{equation}\label{sgnwithingrp}
\mathbb{P}(sgn(\mathbf{\hat{\beta}}_j) = sgn(\mathbf{\beta}_j), ~\forall j  \in \tilde{S}^n_g, \forall g \in \{ 1, \ldots, s\}) \rightarrow 1 \mbox{ as } n, p \rightarrow \infty.
\end{equation}
\textbf{Remark. } The latter observation connects the precision of group lasso estimates to the accuracy of \textit{a priori} available grouping information. In particular, if the pre-specified grouping structure is correct, i.e., all the members within a group have non-zero effect, then for a sufficiently large sample size we have $\tilde{S}_g^n = \mathcal{G}_g$ for all $g \in S$ and group lasso correctly estimates the sign of all the coordinates in the support. On the other hand, in case of a misspecified \textit{a priori} grouping structure (numerous zero coordinates in  $\mathbf{\beta}_g$ for $g \in S$), group lasso correctly estimates only the signs of the influential group members.\\
\textbf{Example. } We demonstrate the property of direction consistency using a small example. Consider a linear model with $8$ predictors
\begin{equation*}
y = 0.5 x_1 - 3 x_2 + 3 x_3+ x_4 -2 x_5 + 3 x_8 + e, ~~~ e \sim N(0, 1)
\end{equation*}
The coefficient vector $\beta^0$ is partitioned into four groups of size $2$, viz., $(0.5, -3), \, (3,1), \, (-2, 0)$ and $(0, 3)$. The last two groups are misspecified. We generated $n = 25$ samples from this model and ran group lasso regression with the above group structure. Figure \ref{dircontdemo} 
 shows the true coefficient vectors (solid) and their estimates (dashed) from five iterations of the above exercise. Note that even though the $\ell_2$ errors between  $\beta^0_{[g]}$ and $\hat{\beta}_{[g]}$ vary largely across the four groups, the distance between their projections on the unit circle, $\left \| D(\beta^0_{[g]}) - D(\hat{\beta}_{[g]})\right \|$, are comparatively stable across groups. In fact, Theorem \ref{selectconsist} shows that under certain irrepresentable conditions (IC) on the design matrix, it is possible to find a uniform (over all $g \in S$) upper bound $\delta_n$ on the $\ell_2$ gap of these direction vectors. This motivates a natural thresholding strategy to correct for the misspecification in groups (cf. Proposition \ref{propthres}). Even though a group $\beta^0_{[g]}$ is misspecified (i.e., lies on a coordinate axis), direction consistency ensures, with high probability, that the corresponding coordinate in $D(\hat{\beta}_{[g]})$ will be smaller than a threshold $\delta_n$ which is common across all groups in the support. 
\begin{figure}[t!]
\begin{center}
		\includegraphics[width = \textwidth]{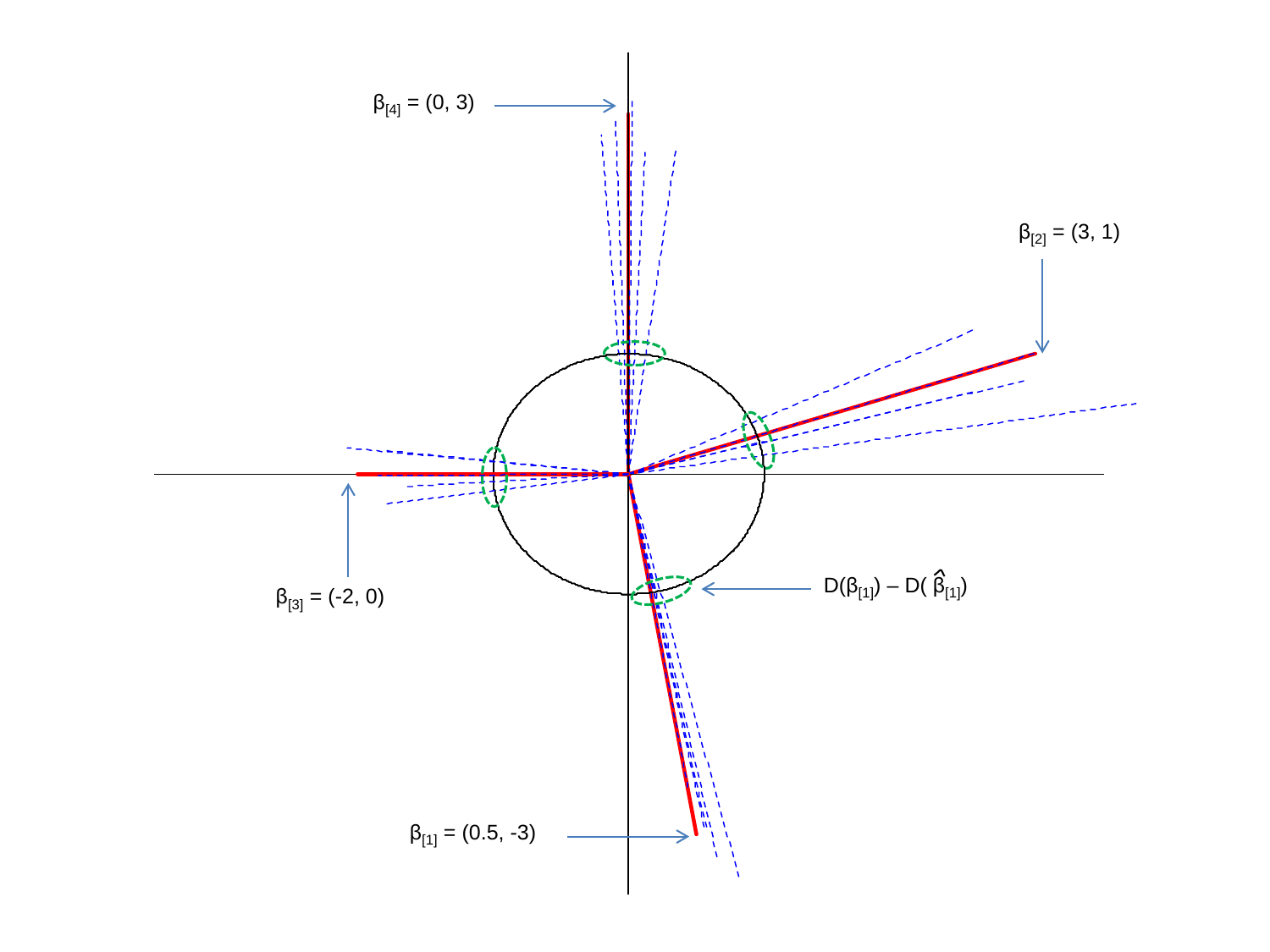}
	\caption{Example demonstrating direction consistency} 
\label{dircontdemo}
\end{center}
\end{figure}

\textbf{Group Irrepresentable Conditions (IC). }
Next, we define the IC required for direction consistency of group lasso estimates. 
Irrepresentable conditions are common in the literature of high-dimensional regression problems \citep{Zhaoyu06, vandegeerconditions} and are shown to be sufficient (and essentially necessary) for selection consistency of the lasso estimates. Further these conditions are known to be satisfied with high probability, if the population analogue of the Gram matrix belongs to the Toeplitz family \citep{Zhaoyu06, wainwright09}. In NGC estimation the population analogue of the Gram matrix $var(\mathbf{X}^{1:(T-1)})$ is block Toeplitz, so the irrepresentable assumptions are natural candidates for studying selection consistency of the estimates.
Consider the notations of \eqref{genericgrplasso} and \eqref{signonsigpartn}. Define $K = diag\left( \lambda_1 \mathbf{I}_{k_1}, \lambda_2 \mathbf{I}_{k_2}, \ldots, \lambda_s \mathbf{I}_{k_s} \right)$. \\
\textbf{Uniform Irrepresentable Condition (IC)} is satisfied if there exists $0 < \eta < 1$ such that for all $\tau \in \mathbb{R}^q $ with $\| \tau \|_{2, \infty} = \displaystyle \max_{1 \le g \le s} \| \tau_{[g]}\|_2  \le 1 $
\begin{equation}  \label{unifirrep}
\frac{1}{\lambda_g}\left \| \left[ C_{21} {\left(C_{11}\right)}^{-1}  K \tau  \right]_{[g]} \right \| < 1-\eta, ~ \forall g \notin S = \{ 1, \ldots, s\}%
\end{equation}
Note that the definition reverts to the usual IC for lasso when all groups correspond to singletons. 

The IC is more stringent than the RE condition and is rarely met if  the underlying model is not sparse. It can be shown that a slightly weaker version of this condition is necessary for direction consistency. We refer the readers to Appendix \ref{app_selection} for further discussion on the different irrepresentable assumptions and their properties.  Numerical evidence suggests that the group IC tends to be less stringent than the IC required for the selection consistency of lasso. We illustrate this using three small simulated examples.\\
\textit{Simulation $1$. } We constructed group sparse NGC models with $T = 5,\, p = 21, \, G = 7, k_g = 3$ and different levels network densities, where the network edges were selected at random and scaled so that $\| A^1 \| = 0.1$. For each of these models we generated $100$ samples of size  $n = 150$ and calculated the proportions of times the two types of irrepresentable conditions were met.  The results are dispayed in Figure \ref{irrep:sparsity}.\\
\textit{Simulation $2$. } We selected a VAR(1) model from the above class and drawn samples of size $n = 20, 50, \ldots, 250$. Figure \ref{irrep:n}  displays the proportions of times (based on $100$ simulations) the two ICs were met.  \\
\textit{Simulation $3$. } We generated $n = 200$ samples from the VAR(1) model of example $2$ for $T = 2, 3, 4, 5, 10, \dots, 40$. Figure \ref{irrep:T}  displays the proportions of times (based on $100$ simulations) the two ICs were met.  
\begin{figure}[t!]
\begin{center}
	\begin{subfigure}[b]{0.31\textwidth}
		\includegraphics[width = \textwidth]{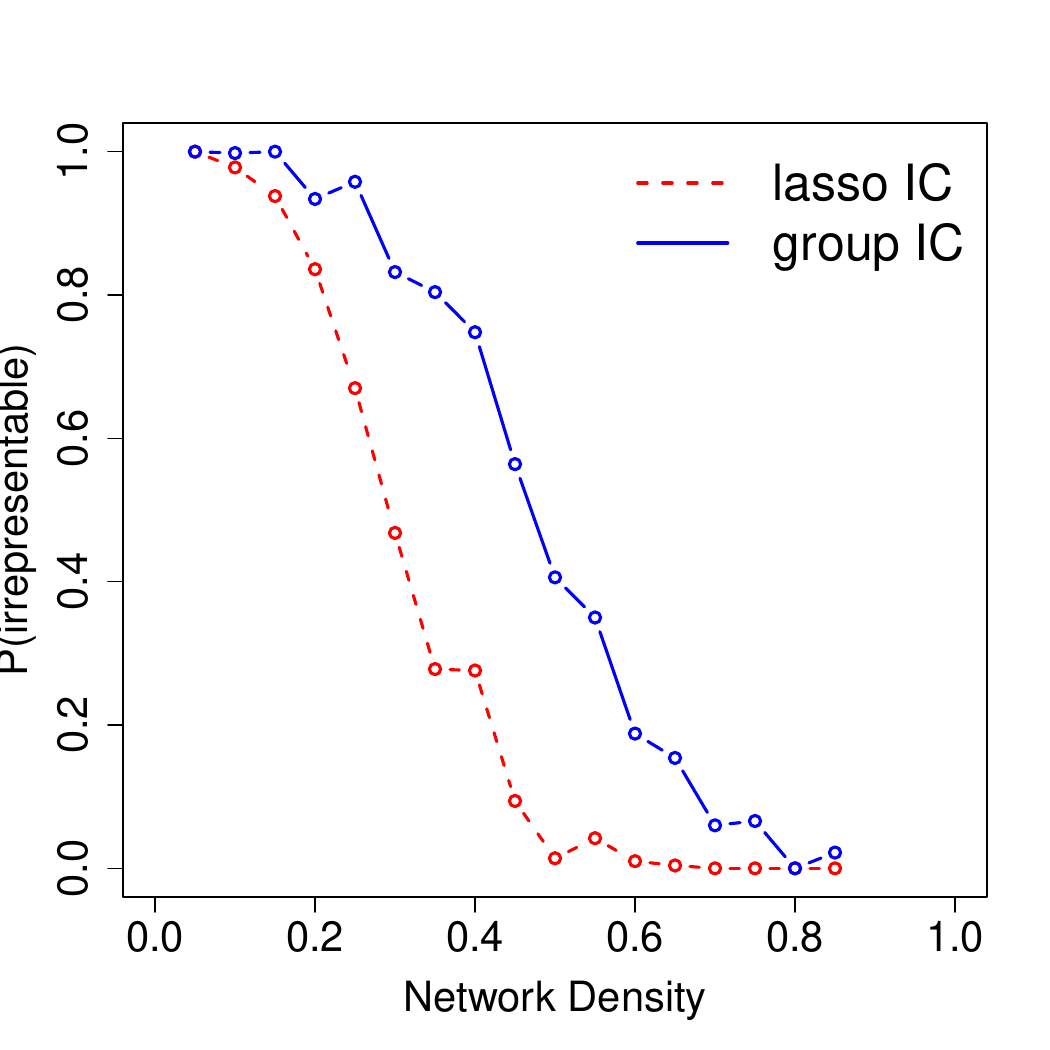}
		\caption{IC and Network Density}
		\label{irrep:sparsity}
	\end{subfigure}
	\begin{subfigure}[b]{0.31\textwidth}
		\includegraphics[width = \textwidth]{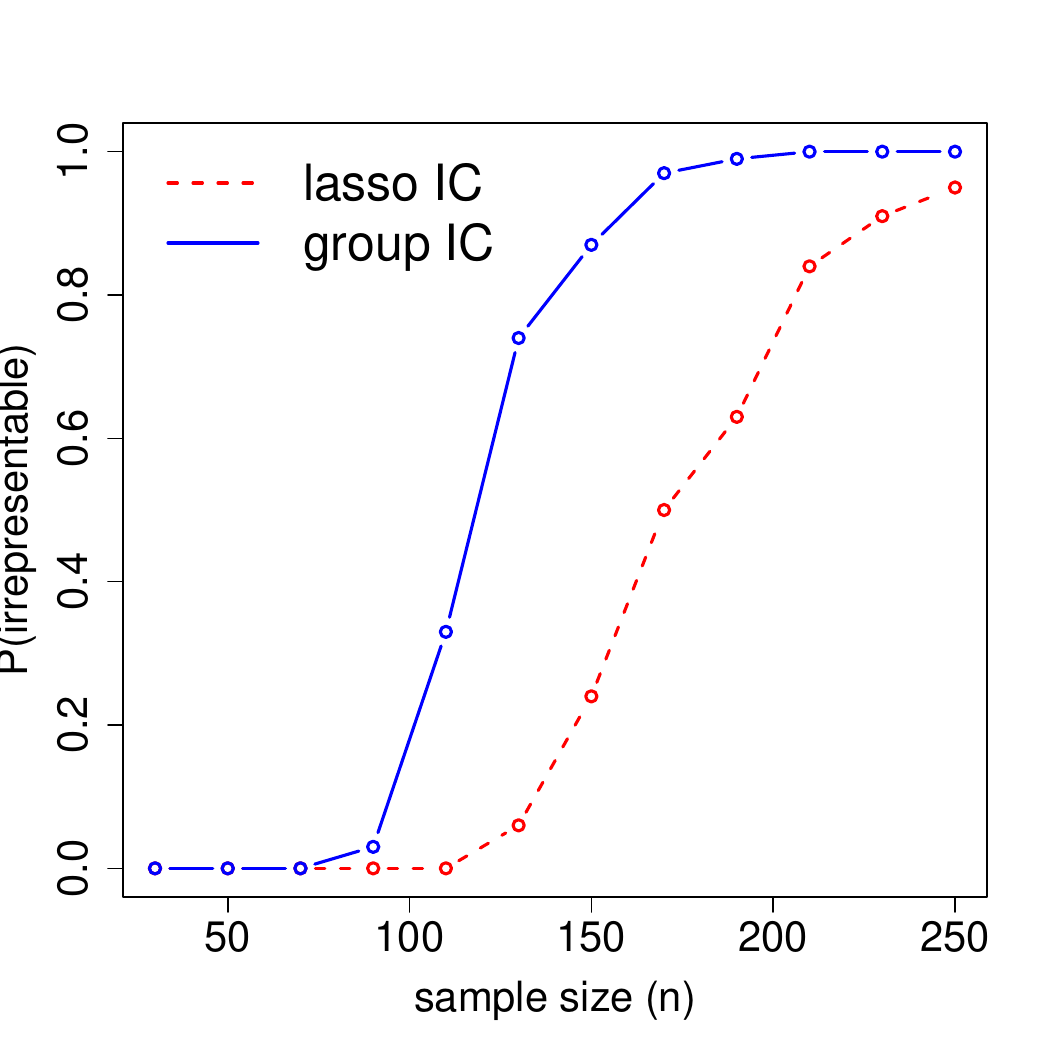}
		\caption{IC and Sample Size}
		\label{irrep:n}
	\end{subfigure}
	\begin{subfigure}[b]{0.31\textwidth}
		\includegraphics[width = \textwidth]{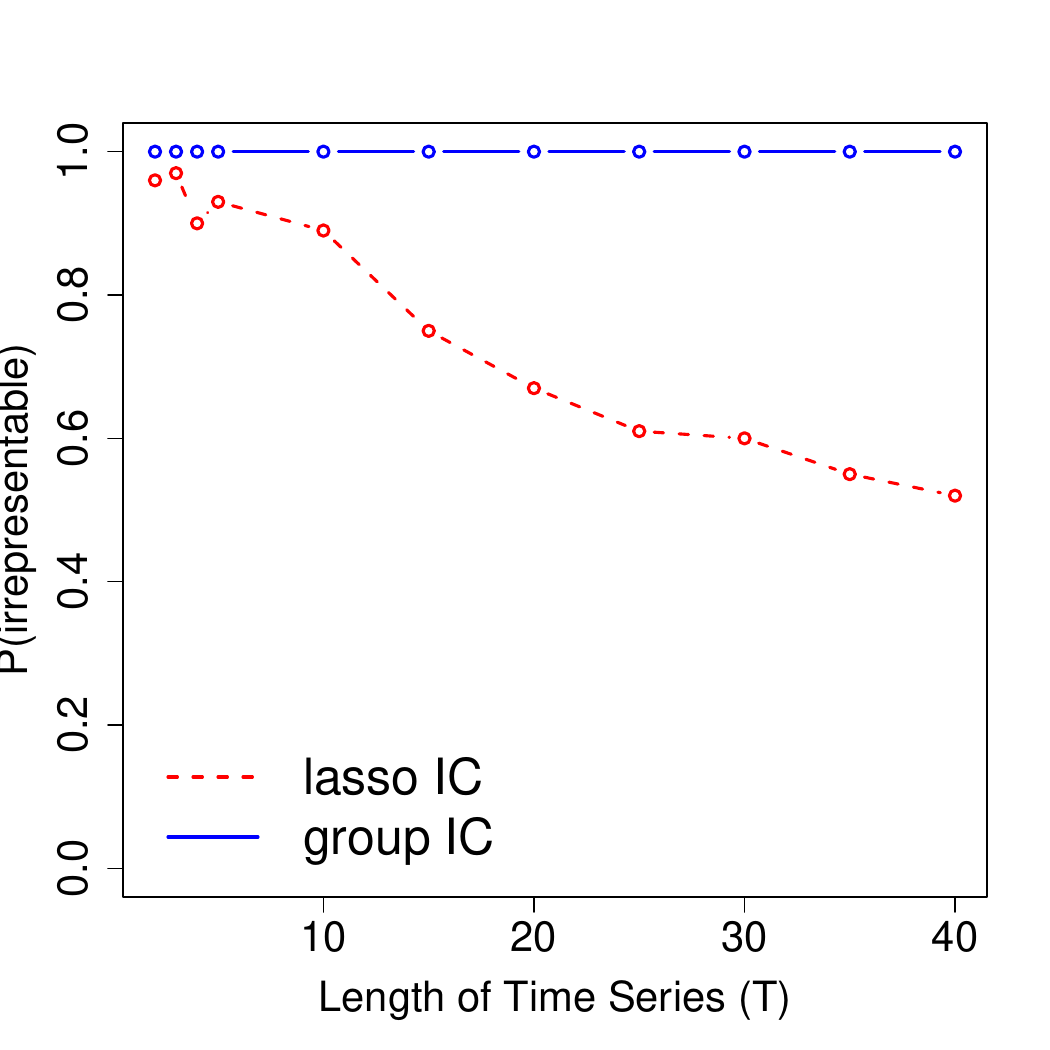}
		\caption{IC and Time Series length}
		\label{irrep:T} 
	\end{subfigure}
\caption{Comparison of lasso and group irrepresentable conditions in the context of group sparse NGC models. (a) group ICs tend to be met for dense networks where lasso IC fails to meet. (b) For the same network group IC is met with smaller sample size than required by lasso. (c) For longer time series group IC is satisfied more often than lasso IC.}
\label{irrep}
\end{center}
\end{figure}

\textbf{Selection consistency for generic group lasso estimates. }
For simplicity, we discuss the selection consistency properties of a generic group lasso regression problem with a common tuning parameter across groups,  i.e., $\lambda_g = \lambda$ for every $g \in \mathbb{N}_G$. Similar results can be obtained for more general choices of the tuning parameters.
\begin{thm}\label{selectconsist}
Assume that the group uniform IC holds with $1 - \eta$  for some $\eta > 0$. Then, for any choice of $\alpha > 0$,
\begin{eqnarray*}
\lambda &\ge& \displaystyle \max_{g \notin S} \frac{1}{\eta} \frac{\sigma}{\sqrt{n}} \sqrt{\left\| \left( C_{22}\right)_{[g][g]} \right\|} \left( \sqrt{k_g} + \frac{\pi}{\sqrt{2}} \sqrt{\alpha \, \log\,G} \right) ~~~\mbox{and}\\ 
\delta_n  &\ge&  \displaystyle \max_{g \in S} \frac{1}{\left\| \beta^0_{[g]} \right\|} 
\left( \lambda \sqrt{s} \left\| (C_{11})^{-1} \right\| + 
\frac{\sigma}{\sqrt{n}} \sqrt{\left\| (C_{11})^{-1}_{[g][g]} \right\| } 
\left(\sqrt{k_g} + \frac{\pi}{\sqrt{2}} \sqrt{\alpha \log\,G}\right) \right),
\end{eqnarray*}
 with probability greater than $1 -4 G^{1-\alpha}$, there exists a solution $\hat{\beta}$ satisfying
\begin{enumerate}
\item $\hat{\beta}_{[g]} = 0$ for all $g \notin S$,
\item $\left\| \hat{\beta}_{[g]} - \beta^0_{[g]} \right\| < \delta_n \left\| \beta_{[g]} \right\|$, and hence $\left\|  D(\hat{\beta}_{[g]}) - D(\beta^0_{[g]}) \right\|  < 2\delta_n\,$, for all $g \in S$. If $\delta_n < 1$, then $\hat{\beta}_{[g]} \neq 0$ for all $g \in S$.
\end{enumerate}
\end{thm}
\textbf{Remark. } The tuning parameter $\lambda$ can be chosen of the same order as required for $\ell_2$ consistency to achieve selection consistency within groups in the sense of \eqref{sgnwithingrp}.
Further, with the above choice of $\lambda$, $\delta_n$ can be chosen of the order of $O(\sqrt{s}(\sqrt{k_{max}} + \sqrt{\log\,G})/\sqrt{n})$. 
Thus, group lasso correctly identifies the group sparsity pattern and is direction consistent if $\sqrt{s}(\sqrt{k_{max}} + \sqrt{\log\,G})/\sqrt{n} \rightarrow 0$, the same scaling required for $\ell_2$ consistency.\\

\textbf{ Thresholding in Group NGC estimators. }
As described in Section \ref{secmodel}, regular group NGC estimates can be thresholded both at the group and coordinate levels. The first level of thresholding is motivated by the fact that lasso can select too many false positives  [cf. \citet{vandegeerthreshadaptejs2011}, \citet{zhou2010thresholded} and the references therein]. 
The second level of thresholding  employs the direction consistency of regular group NGC estimates to perform within group variable selection with high probability.  The following proposition demonstrates the benefit of these two types of thresholding. The second result is an immediate corollary of Theorem \ref{selectconsist}. Proof of the first result (thresholding at group level) requires some additional notations and is delegated to Appendix \ref{app_thres}.
\begin{thm}\label{propthres}
Consider a generic group lasso regression problem \eqref{genericgrplasso} with common tuning parameter $\lambda_g = \lambda$.\\
(i)~  Assume the RE(s, 3) condition of \eqref{RElounici} holds with a constant $\phi_{RE}$ and define
$\hat{\beta}^{thgrp}_{[g]} = \hat{\beta}_{[g]} \ind_{\| \hat{\beta}_{[g]} \| > 4 \lambda}$.
If $\hat{S} = \{ g \in \NN_G: \hat{\beta}^{thgrp}_{[g]} \ne \mathbf{0}\}$, then $|\hat{S} \backslash S| \le \frac{ s}{\phi_{RE}^2/12}$, with probability at least $1 - 2G^{1-\alpha}$.\\
(ii)~Assume that uniform IC holds with $1-\eta$ for some $\eta > 0$. Choose $\lambda$ and $\delta_n$ as in  Theorem~\ref{selectconsist}  and define
\begin{equation*}
\hat{\beta}^{thgrp}_{j} = \hat{\beta}_j \ind \{|\hat{\beta}_j|/ \|\hat{\beta}_{[g]} \| > 2 \, \delta_n \} \mbox{  for all $j \in \mathcal{G}_g$ }
\end{equation*}
Then $sgn(\beta^0_j) = sgn(\hat{\beta}^{thgrp}_j)~\forall \, j \in \mathbb{N}_p$ with probability at least  $1-4G^{1-\alpha}$, if $\displaystyle \min_{j \in supp(\beta^0)} |\beta^0_j|  > 2 \delta_n \, \| \beta^0_{[g]} \|$ for all $j \in \calG_g$, i.e., the effect of every non-zero member in a group is ``visible" relative to the total effect from the group.
\end{thm}


\section{Performance Evaluation}\label{secsim}
\begin{figure}[t!]
\includegraphics[width = \textwidth, height = 0.15\textwidth, trim = 0in 0.1in 0in 0.5in]{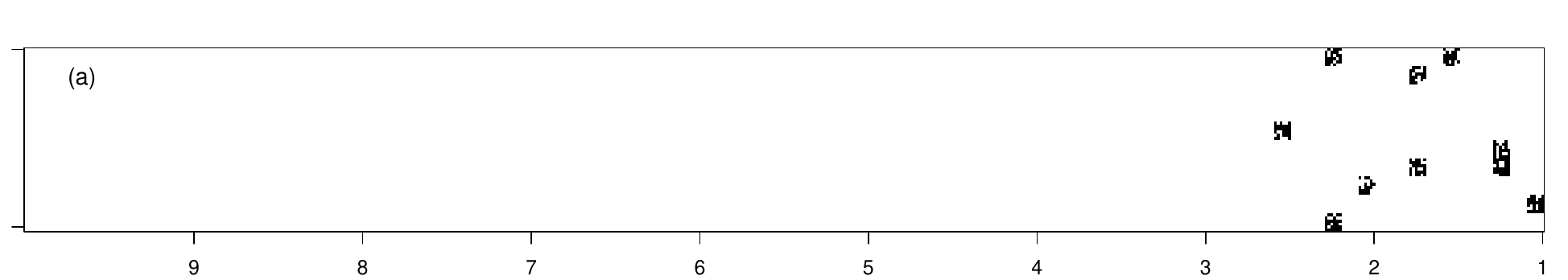}
\includegraphics[width = \textwidth, height = 0.15\textwidth, trim = 0in 0.1in 0in 0.5in]{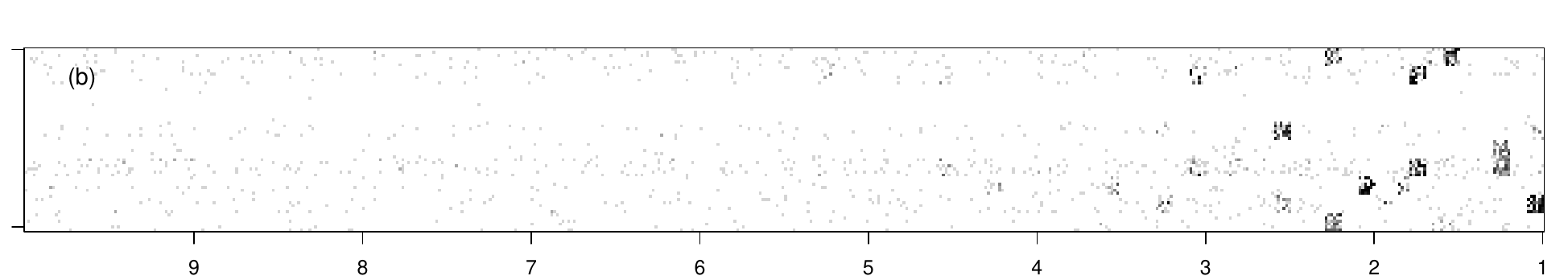}
\includegraphics[width = \textwidth, height = 0.15\textwidth, trim = 0in 0.1in 0in 0.5in]{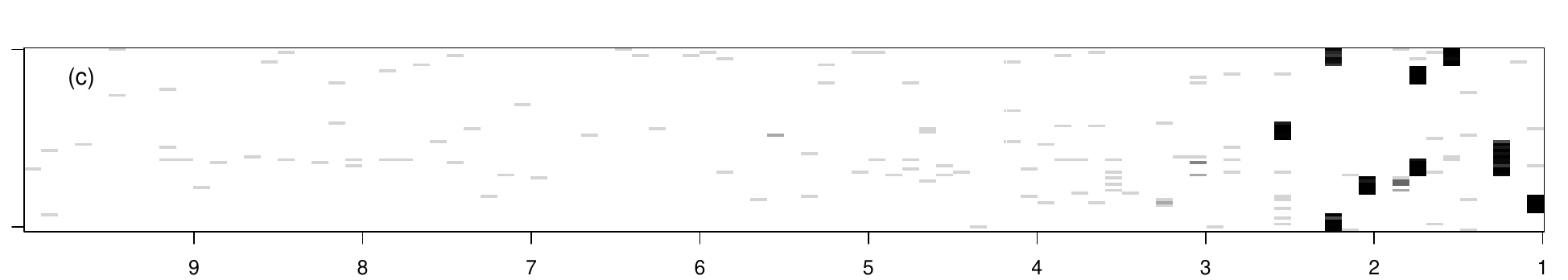}
\includegraphics[width = \textwidth, height = 0.15\textwidth, trim = 0in 0.1in 0in 0.5in]{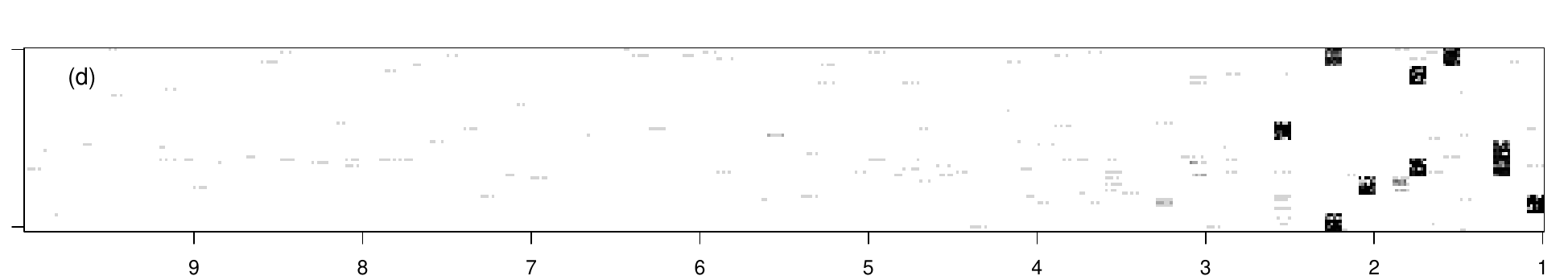}
\caption{Estimated adjacency matrices of a misspecified NGC model with p = 60, T = 10, n = 60: (a) True, (b) Lasso, (c) Group Lasso, (d) Thresholded Group Lasso. The grayscale represents the proportion of times an edge was detected in $100$ simulations.}
\label{fig_adj}
\end{figure}
We evaluate the performances of regular, adaptive and thresholded variants of the group NGC estimators through an extensive
simulation study, and compare the results to those obtained from lasso estimates. The R package \texttt{grpreg} \citep{grpreg09} was
used to obtain the group lasso estimates.
The settings considered are:\\
(a) \textit{Balanced groups of equal size}: i.i.d samples of size $n=60,110,160$ are generated from lag-2 ($d=2$)
VAR models on $T=5$ time points, comprising of $p=60,120,200$ nodes partitioned into groups of equal size in the range 3-5.\\
(b) \textit{Unbalanced groups:} We retain the same setting as before, however the corresponding node set is partitioned into one larger group of size 10 and many groups of size 5.\\
(c) \textit{Misspecified balanced groups:} i.i.d samples of size $n=60,110,160$ are generated from lag-2 ($d=2$)
VAR models on $T=10$ time points, comprising of $p=60,120$ nodes partitioned into groups of size 6. Further, for each group there is a 30\% misspecification rate, namely that for every parent group of a downstream node, 30\% of the group members do not exert any effect on it.

The tuning parameter $\lambda$ is chosen based on a grid search in the interval
$[C_1 \lambda_e, C_2 \lambda_e]$ where $\lambda_e  = \sqrt{2 \,\log\,p/n}$ for lasso and $\sqrt{2 \,\log\,G/n}$ for group lasso, using
a $19:1$ sample-splitting. The thresholding parameters are selected as $ \delta_{grp} = 0.7 \lambda \sigma$ at the group level and $\delta_{misspec} = n^{-0.2}$ within groups. Finally, within group thresholding is applied only when the group structure is misspecified.

The following performance metrics were used for comparison purposes: (i) $Precision= TP/(TP+FP)$ , (ii) $Recall = TP/(TP+FN)$ and (iii) Matthew's Correlation coefficient (MCC) defined as
\begin{equation*}
\frac{(TP \times TN) - (FP \times FN)}{((TP + FP)\times (TP + FN) \times(TN + FP) \times(TN + FN))^{1/2}}
\end{equation*}
 where $TP$, $TN$, $FP$ and $FN$ correspond to true positives, true negatives, false positives and false
negatives in the estimated network, respectively.

\begin{table}[h!]
\centering
\caption{Performance of different regularization methods in estimating graphical Granger causality with {\textbf{balanced}} group sizes and no misspecification; $d=2$, $T=5$, $SNR=1.8$. Precision ($P$), Recall ($R$), MCC are given in percentages (numbers in parentheses give standard deviations). ERR LAG gives the error associated with incorrect estimation of VAR order.}
\label{tbl:bal}
{\scriptsize
\begin{tabular}{r|l|ccc|ccc|ccc|}   
\multicolumn{2}{c}{} & \multicolumn{3}{c}{$p=60$, $|E|=351$} & \multicolumn{3}{c}{$p=120$, $|E|=1404$} & \multicolumn{3}{c}{$p=200$, $|E|=3900$} \\
\multicolumn{2}{c}{} & \multicolumn{3}{c}{Group Size=3} & \multicolumn{3}{c}{Group Size=3} & \multicolumn{3}{c}{Group Size=5} \\ \cline{2-11}
\multicolumn{1}{c|}{} &	n & 160 & 110 & 60 & 160 & 110 &	60	    &	160	    &	110	    &	60	\\ \cline{2-11}
P &	Lasso & 80(2)	&	75(2)	&	66(4)	&	69(1)	&	62(2)	&	52(2)	&	52(1)	&	47(1)	&	38(1)	\\
 &	Grp	  &  95(2)	&	91(4)	&	83(7)	&	91(3)	&	80(5)	&	68(7)	&	78(4)	&	72(3)	&	59(6)	\\
 &	Thgrp & 96(1)	&	92(3)	&	86(6)	&	93(3)	&	83(5)	&	70(7)	&	82(4)	&	76(3)	&	64(6)	\\
 &	Agrp  & 96(2)	&	92(4)	&	83(7)	&	92(3)	&	82(5)	&	69(7)	&	81(3)	&	74(3)	&	60(6)	\\ \cline{2-11}
R &	Lasso & 71(2)	&	54(2)	&	31(2)	&	54(1)	&	40(1)	&	22(1)	&	38(1)	&	28(1)	&	15(1)	\\
 &	Grp	  &  99(1)	&	93(3)	&	71(7)	&	91(2)	&	81(2)	&	48(8)	&	84(1)	&	70(2)	&	41(4)	\\
 &	Thgrp & 99(1)	&	93(3)	&	71(7)	&	91(2)	&	81(2)	&	48(8)	&	84(2)	&	69(2)	&	41(3)	\\
 &	Agrp  & 99(1)	&	93(3)	&	71(7)	&	91(2)	&	81(2)	&	47(8)	&	84(1)	&	69(2)	&	40(4)	\\ \cline{2-11}
MCC	& Lasso & 75(2)	&	63(2)	&	45(3)	&	60(1)	&	49(1)	&	33(1)	&	43(1)	&	35(1)	&	23(1)	\\
 &	Grp	  &  97(1)	&	92(3)	&	76(5)	&	91(1)	&	80(2)	&	56(2)	&	81(2)	&	70(2)	&	48(2)	\\
 &	Thgrp & 98(1)	&	93(2)	&	78(5)	&	92(1)	&	81(2)	&	57(3)	&	83(2)	&	72(2)	&	50(3)	\\
 &	Agrp  & 97(1)	&	92(3)	&	76(5)	&	91(1)	&	81(2)	&	56(3)	&	82(2)	&	71(2)	&	48(2)	\\ \cline{2-11} 
ERR & Lasso	&	10.5	&	11.3	&	13.9	&	16.63	&	17.37	&	16.69	&	19.79	&	20	&	18.52	\\
LAG & Grp	&	3.19	&	6.95	&	12.76	&	4.86	&	10.77	&	12.65	&	4.21	&	5.27	&	7.8	\\
& Thgrp	&	2.83	&	5.87	&	10.01	&	3.98	&	9.03	&	11.19	&	3.06	&	3.91	&	5.68	\\
& Agrp	&	3.13	&	6.89	&	12.59	&	4.63	&	10.37	&	12.34	&	3.58	&	4.87	&	7.59	\\
 \cline{2-11}
\end{tabular}
}
\end{table}

The results for the balanced settings are given in Table~\ref{tbl:bal}. The average and standard deviations (in parentheses) of the performance metrics are presented for each setup. The Recall for $p=60$ shows that even for a network with
$60 \times (5-1) = 240$ nodes and $|E| = 351$ true edges, the group NGC estimators recover about $71\%$ of the true edges with a sample size as low as $n = 60$, while lasso based NGC estimates recover only $31\%$ of the true edges. The three group NGC estimates have comparable performances in all the cases. However thresholded lasso shows slightly higher precision than the other group NGC variants for smaller sample sizes (e.g., $n=60, p = 200$). The results for $p = 60, n = 110$ also display that lower precision of lasso is caused partially by its inability to estimate the order of the VAR model correctly, as measured by ERR LAG=Number of falsely connected edges from lags beyond the true order of the VAR model divided by the number of edges in the network ($|E|$).
This finding is nicely illustrated in Figure~\ref{fig_adj} and Table~\ref{tbl:bal}. The group penalty encourages edges from the nodes of the same group to be picked up together. Since the nodes of the same group are also from the same time lag, the group variants have substantially lower ERR LAG. For example, average ERR LAG of lasso for $p=200, ~n=160$ is $19.79\%$ while the average ERR LAGs for the group lasso variants are in the range $3.06\% - 4.21\%$.

\begin{table}[h!]
\centering
\caption{Performance of different regularization methods in estimating graphical Granger causality with {\textbf{unbalanced}} group sizes and no misspecification; $d=2$, $T=5$, $SNR=1.8$. Precision ($P$), Recall ($R$), MCC are given in percentages (numbers in parentheses give standard deviations). ERR LAG gives the error associated with incorrect estimation of VAR order.}
\label{tbl:unbal}
{\scriptsize
\begin{tabular}{r|l|ccc|ccc|ccc|}   
\multicolumn{2}{c}{} & \multicolumn{3}{c}{$p=60$, $|E|=450$} & \multicolumn{3}{c}{$p=120$, $|E|=1575$} & \multicolumn{3}{c}{$p=200$, $|E|=4150$} \\
\multicolumn{2}{c}{} & \multicolumn{3}{c}{Groups=$1\times10, 10\times5$} & \multicolumn{3}{c}{Groups=$1\times10, 22\times5$} & \multicolumn{3}{c}{Groups=$1\times10, 38\times5$} \\ \cline{2-11}
\multicolumn{1}{c|}{} &	n & 160 & 110 &	60 & 160 & 110 & 60	    &	160	    &	110	    &	60	\\ \cline{2-11}
P	&	Lasso & 72(2)	&	69(3)	&	62(2)	&	51(1)	&	48(1)	&	41(1)	&	61(1)	&	53(1)	&	42(2)	\\
	&	Grp   & 84(4)	&	79(6)	&	76(9)	&	55(5)	&	47(5)	&	40(6)	&	86(3)	&	77(5)	&	66(7)	\\
	&	Thgrp & 86(4)	&	82(7)	&	78(11)	&	60(6)	&	50(7)	&	40(5)	&	88(2)	&	79(6)	&	69(6)	\\
	&	Agrp  & 85(3)	&	81(5)	&	77(9)	&	59(5)	&	51(5)	&	42(6)	&	88(2)	&	78(5)	&	67(6)	\\ \cline{2-11}
R	&	Lasso & 45(2)	&	35(2)	&	22(2)	&	43(1)	&	34(1)	&	22(1)	&	23(1)	&	15(0)	&	7(0)	\\
	&	Grp   & 94(3)	&	87(5)	&	61(8)	&	88(2)	&	75(5)	&	48(6)	&	73(3)	&	49(6)	&	22(5)	\\
	&	Thgrp & 95(2)	&	88(4)	&	62(8)	&	89(3)	&	77(4)	&	50(5)	&	73(3)	&	50(6)	&	21(5)	\\
	&	Agrp  & 94(3)	&	87(5)	&	61(8)	&	88(2)	&	75(5)	&	48(6)	&	73(3)	&	49(6)	&	22(5)	\\ \cline{2-11}
MCC	&	Lasso & 56(2)	&	48(2)	&	35(2)	&	46(1)	&	39(1)	&	29(1)	&	36(1)	&	28(1)	&	17(1)	\\
	&	Grp   & 89(3)	&	82(4)	&	67(5)	&	68(3)	&	58(3)	&	42(3)	&	79(1)	&	61(3)	&	37(3)	\\
	&	Thgrp & 90(3)	&	84(4)	&	68(6)	&	72(4)	&	61(4)	&	43(2)	&	80(1)	&	62(3)	&	37(3)	\\
	&	Agrp  & 89(3)	&	83(4)	&	67(6)	&	71(3)	&	60(3)	&	43(3)	&	79(1)	&	61(3)	&	37(3)	\\ \cline{2-11}
ERR & Lasso	&	10.59	&	10.74	&	11.76	&	18.3	&	18.72	&	18.76	&	11.54	&	10.93	&	9.29	\\
LAG	& Grp	&	7.04	&	9.85	&	13.04	&	12.53	&	14.71	&	13.06	&	4.8	&	6.41	&	6.85	\\
	& Thgrp	&	6.58	&	8.98	&	11.1	&	9.6	&	11.9	&	10.9	&	4.06	&	5.65	&	5.7	\\
	& Agrp	&	6.74	&	9.19	&	12.96	&	10.81	&	12.78	&	11.79	&	4.55	&	6.2	&	6.81	\\
\cline{2-11}
\end{tabular}
}
\end{table}

The results for the unbalanced networks are given in Table \ref{tbl:unbal}. As in the balanced group setup, in almost all the simulation settings the group NGC variants  outperform the lasso estimates with respect to all three performance metrics. However the performances of the different variants of group NGC are comparable and tend to have higher standard deviations than the lasso estimates. Also the average ERR LAGs for the group NGC variants are substantially lower than the average ERR LAG for lasso demonstrating the advantage of group penalty.  Although the conclusions regarding the comparisons of lasso and group NGC estimates remain unchanged it is evident that the performances of all the estimators are affected by the presence of one large group, skewing the uniform nature of the network. For example the MCC measures of group NGC estimates in a  balanced network with $p=60$ and $|E| = 351$ vary around $97 - 98\%$ which lowers to $89\% - 90\%$ when the groups are unbalanced.

\begin{table}[h!]
\centering
\caption{Performance of different regularization methods in estimating graphical Granger causality with {\textbf{misspecified}} groups (30\% misspecification); $d=2$, $T=10$, $SNR=2$. Precision ($P$), Recall ($R$), MCC are given in percentages (numbers in parentheses give standard deviations). ERR LAG gives the error associated with incorrect estimation of VAR order.}
\label{tbl:mis}
{\scriptsize
\begin{tabular}{r|l|ccc|ccc|}   
\multicolumn{2}{c}{} & \multicolumn{3}{c}{$p=60$, $|E|=246$} & \multicolumn{3}{c}{$p=120$, $|E|=968$} \\
\multicolumn{2}{c}{} & \multicolumn{3}{c}{Group Size=6} & \multicolumn{3}{c}{Group Size=6} \\ \cline{2-8}
\multicolumn{1}{c|}{} &	n & 160 & 110 & 60 & 160 & 110 &	60	    \\ \cline{2-8}
P	&	Lasso	&	88(2)	&	85(3)	&	77(5)	&	59(1)	&	55(1)	&	49(2)	\\
	&	Grp	    &	65(2)	&	66(2)	&	66(3)	&	43(3)	&	44(4)	&	38(4)	\\
	&	Thgrp	&	87(3)	&	88(3)	&	85(3)	&	56(6)	&	56(6)	&	51(7)	\\
	&	Agrp	&	65(2)	&	66(2)	&	66(3)	&	45(2)	&	45(4)	&	39(4)	\\ \cline{2-8}
R	&	Lasso	&	80(3)	&	63(3)	&	37(2)	&	66(1)	&	54(1)	&	35(1)	\\
	&	Grp	    &	100(0)	&	98(2)	&	82(6)	&	87(2)	&	78(3)	&	59(4)	\\
	&	Thgrp	&	100(0)	&	98(2)	&	79(6)	&	86(2)	&	79(3)	&	57(4)	\\
	&	Agrp	&	100(0)	&	98(2)	&	82(6)	&	86(2)	&	78(3)	&	58(3)	\\ \cline{2-8}
MCC	&	Lasso	&	84(2)	&	73(2)	&	53(3)	&	62(1)	&	54(1)	&	41(1)	\\
	&	Grp	    &	81(1)	&	80(2)	&	74(4)	&	61(2)	&	58(3)	&	47(2)	\\
	&	Thgrp	&	93(2)	&	93(2)	&	82(4)	&	69(4)	&	66(4)	&	53(3)	\\
	&	Agrp	&	81(1)	&	80(2)	&	74(4)	&	62(2)	&	59(2)	&	47(2)	\\ \cline{2-8}
ERR	&	Lasso	&	12.63	&	17.05	&	22.41	&	45.09	&	49.68	&	53.4	\\ 
LAG	&	Grp	    &	9.43	&	8.78	&	15.12	&	18.22	&	18.43	&	29.26	\\
	&	Thgrp	&	6.45	&	5.34	&	8.02	&	11.81	&	12.84	&	15.57	\\
	&	Agrp	&	9.11	&	8.78	&	14.96	&	16.32	&	16.9	&	27.69	\\ \cline{2-8}
\end{tabular}
}
\end{table}

The results for misspecified groups are given in Table \ref{tbl:mis}. Note that for higher sample size $n$, the MCC of lasso and regular group lasso are comparable. However, the thresholded version of group lasso ($\delta_{misspec} = n^{-0.2}$ used for within group selection) achieves significantly higher MCC than the rest. This demonstrates the advantage of using the directional consistency of group lasso estimators to perform within group variable selection. We would like to mention here that a careful choice of the thresholding parameters $\delta_{grp}$ and $\delta_{misspec}$ via cross-validation or other model selection criteria indicate improvement in the performance of thresholded group lasso; however, we do not pursue these methods here as they require grid search over many tuning parameters or an efficient estimator of the degree of freedom of group lasso.

In summary, the results clearly show that all variants of group lasso NGC outperform the lasso-based ones, whenever the grouping structure of the variables is known and correctly specified. Further, their performance depends on the composition of group sizes. On the other hand, if the a priori known group structure is moderately misspecified lasso estimates produce comparable results to regular and adaptive group NGC ones, while thresholded group estimates outperform all other methods, as expected.

\section{Application}\label{secdata}
\textbf{Example: T-cell activation. }
Estimation of gene regulatory networks from expression data is a fundamental problem in functional genomics \citep{friedman2004inferring}. 
Time course data coupled with NGC models are informationally rich enough for the task at hand. The data for this application come
from \citet{rangel2004modeling}, where expression patterns of genes involved in T-cell activation were studied with the goal of discovering regulatory mechanisms that govern them in response to external stimuli. Activated T-cells are involved in regulation of effector cells (e.g. B-cells) and play a central role in mediating immune response. The available data comprising of $n=44$ samples of $p=58$ genes, measure the
cells response at 10 time points, $t = 0, 2, 4, 6, 8, 18, 24, 32, 48, 72$ hours after their stimulation with a T-cell receptor independent activation mechanism.
We concentrate on data from the first 5 time points, that correspond to early response mechanisms in the cells.

Genes are often grouped based on their function and activity patterns into biological pathways. Thus, the knowledge of gene functions and their membership in biological pathways can be used as inherent grouping structures in the proposed group lasso estimates of NGC. Towards this, we used available biological knowledge to define groups of genes based on their biological function. Reliable information for biological functions were found from the literature for 38 genes, which were retained for further analysis. These 38 genes were grouped into 13 groups with the number of genes in different groups ranging from 1 to 5.

\begin{figure}[t!]
\centering
{\includegraphics[scale = 0.5, trim = 0in 0.9in 0in 0in]{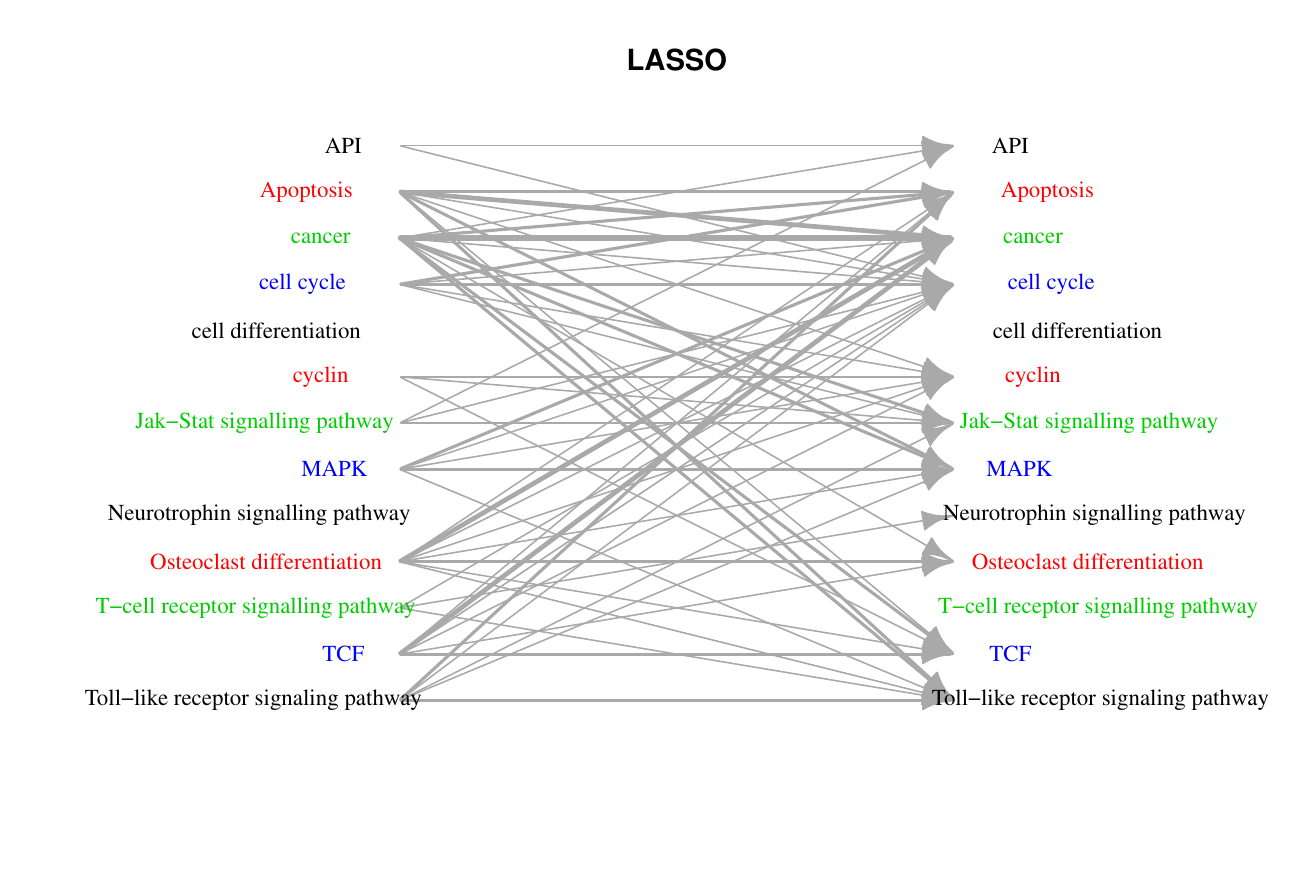}} 
{\includegraphics[scale = 0.5, trim = 0in 0.4in 0in 0in]{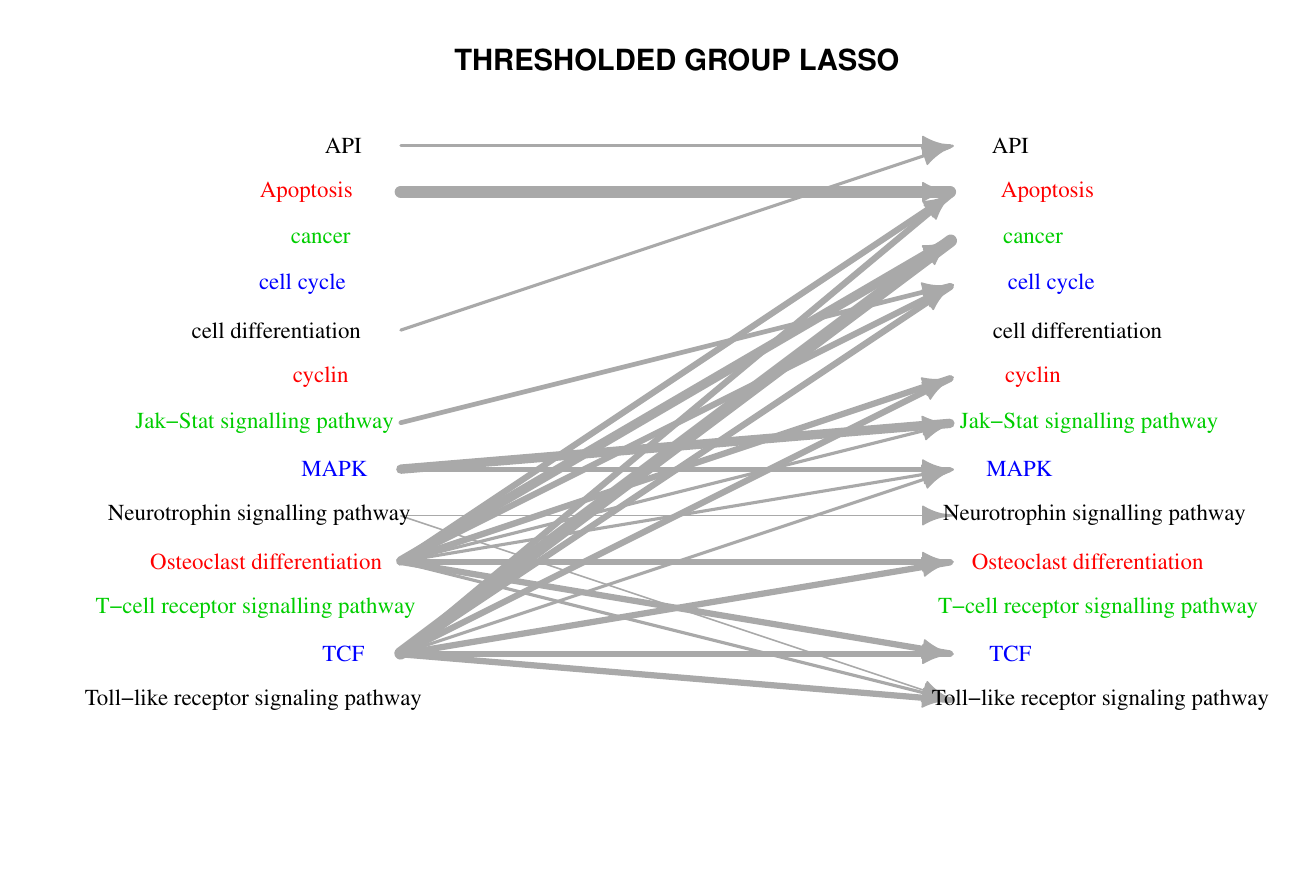}}
\caption{Estimated Gene Regulatory Networks of T-cell activation. Width of edges represent the number of effects between two groups, and the network represents the aggregated regulatory network over 3 time points.}
\label{fig_gene}
\end{figure}


\begin{table}[t!]
  \caption{Mean and standard deviation of MSE for different NGC estimates}\label{tbl_gene}
  \centering
  \begin{tabular}{|c|c c c c|} \hline
  \,	& Lasso	& Grp & Agrp & Thgrp \\ \hline
  mean & 0.649 & 0.456 & 0.457 & 0.456 \\
  stdev & 0.340 & 0.252 & 0.251 & 0.252 \\ \hline
  \end{tabular}
\end{table}

Figure~\ref{fig_gene} shows the estimated networks based on lasso and thresholded group lasso estimates, where for ease of representation the nodes of the network correspond to groups of genes.
In this case, estimates from variants of group NGC estimator were all similar, and included a number of known regulatory mechanisms in T-cell activation, not present in the regular lasso estimate. For instance, \citet{waterman1990purification} suggest that TCF plays a significant role in activation of T-cells, which may describe the dominant role of this group of genes in the activation mechanism. On the other hand, \citet{kim2005nuclear} suggest that activated T-cells exhibit high levels of osteoclast-associated receptor activity which may attribute the large number of associations between member of osteoclast differentiation and other groups. Finally, the estimated networks based on variants of group lasso estimator also offer improved estimation accuracy in terms of mean squared error (MSE) despite having having comparable complexities to their regular lasso counterpart (Table~\ref{tbl_gene}), which further confirms the findings of other numerical studies in that paper.

\textbf{Example: Banking balance sheets application. }
\begin{figure}[t!]
\centering
{\includegraphics[scale = 0.5, trim= 0in 1in 0.5in 0.5in, clip=true ]{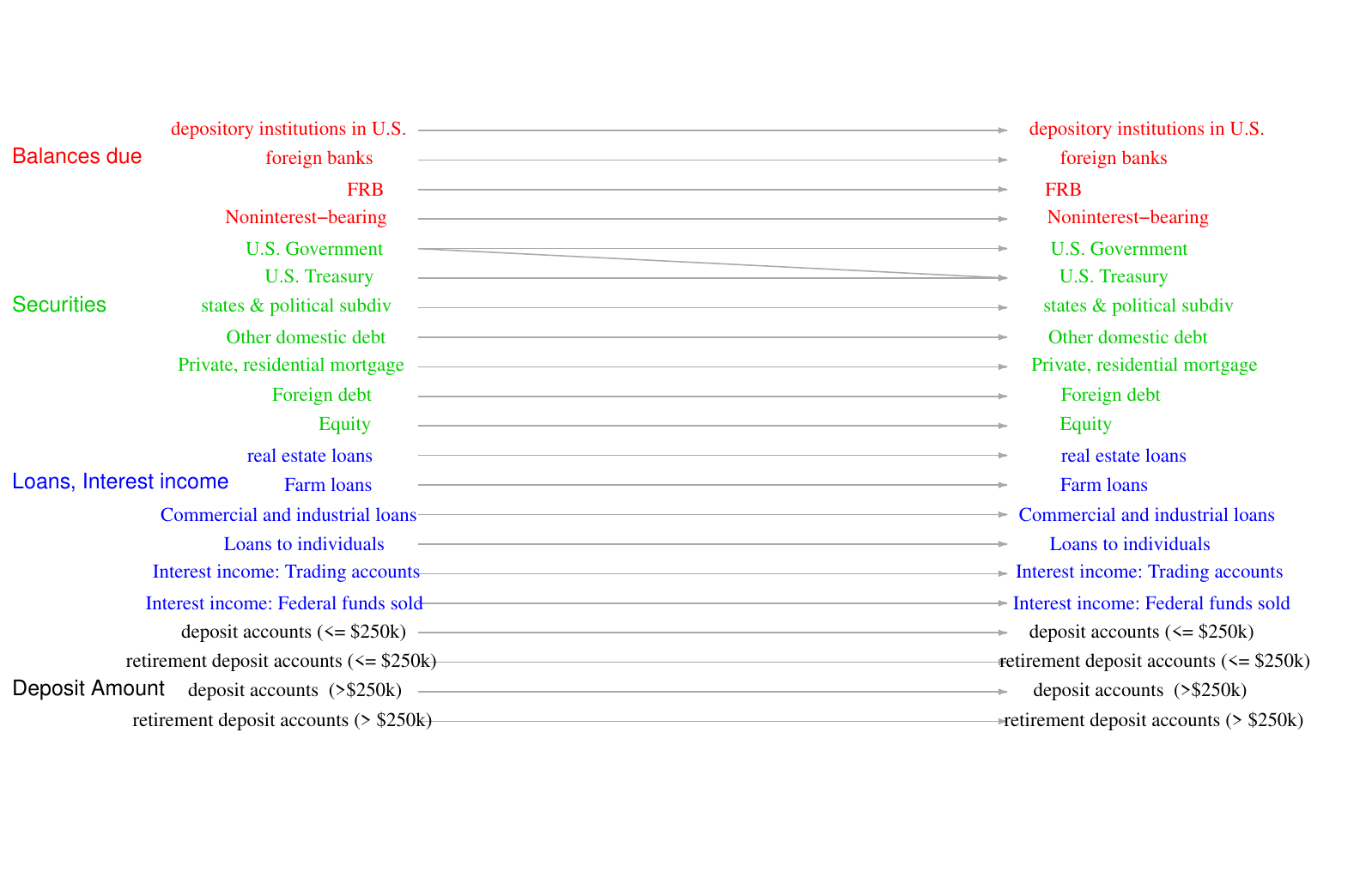}}
{\includegraphics[scale = 0.5, trim= 0in 1in 0.5in 0.5in, clip=true ]{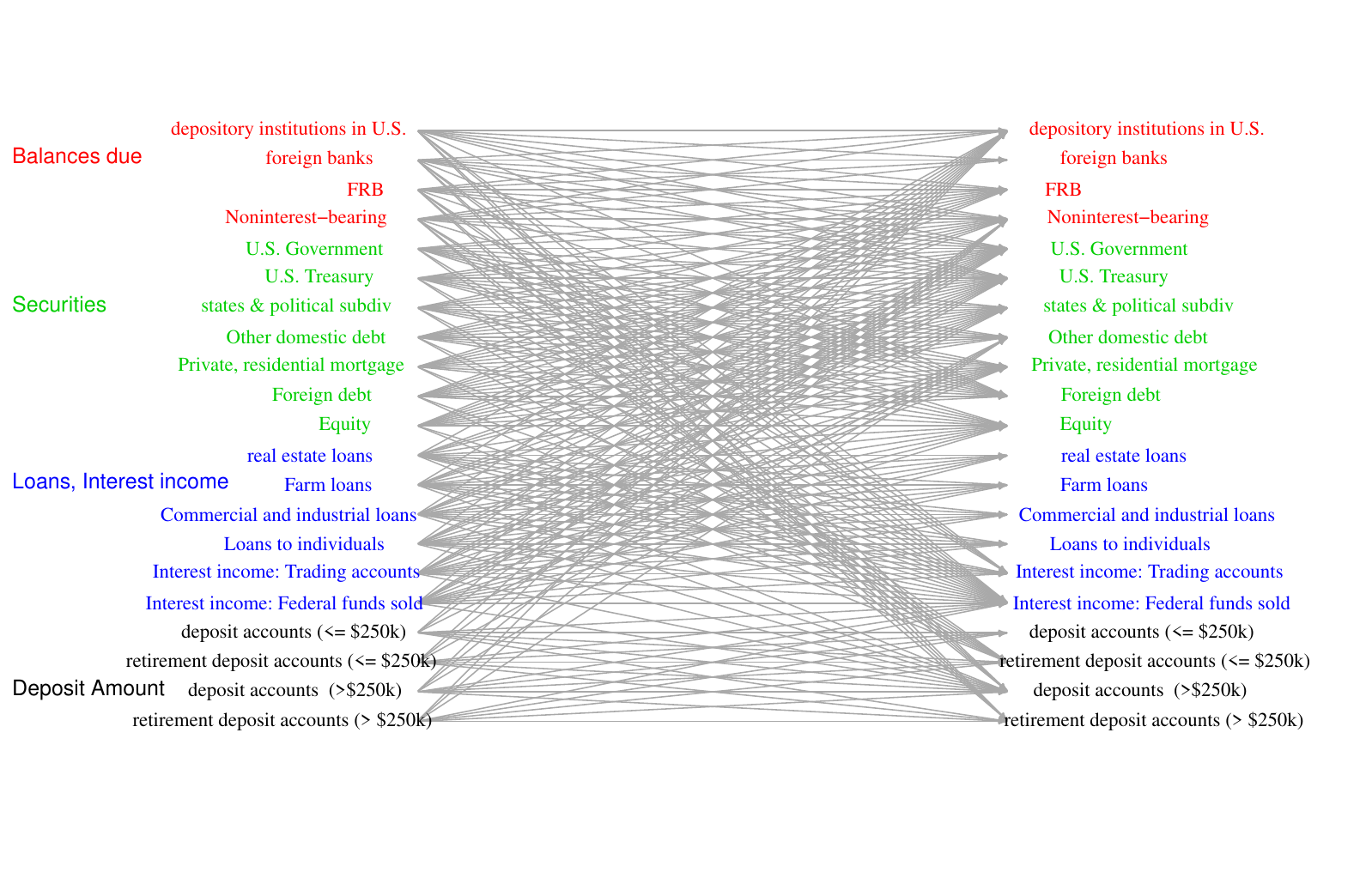}}
\caption{Estimated Networks of banking balance sheet variables using (a) lasso and (b) group lasso. The networks represent the aggregated network over 5 time points.}
\label{fig_bank_l}
\end{figure}
\begin{table}
\caption{Mean and standard deviation (in parentheses) of PMSE (MSE in case of Dec 2010) for prediction of banking balance sheet variables.}
\label{MSE_bank} 
\centering
\begin{tabular}{|r|cccc|} \hline
Quarter & Lasso & Grp & Agrp & Thgrp \\ \hline
Dec 2010 & 1.59 (0.29) & 0.36 (0.05) & 0.36 (0.05) & 0.37 (0.05) \\ 
Mar 2011 & 1.46 (0.30) & 0.47 (0.23) & 0.47 (0.23) & 0.46 (0.22)  \\ 
Jun 2011 & 1.33 (0.26) & 0.36 (0.11) & 0.36 (0.11) & 0.35 (0.11)  \\ 
Sep 2011 & 1.72 (0.32) & 0.50 (0.18)& 0.50 (0.18) & 0.47 (0.16)  \\ \hline
\end{tabular}
\end{table}
In this application, we examine the structure of the balance sheets in terms of assets and liabilities of the $n=50$ largest (in terms of total balance sheet size) US banking corporations. The data cover $9$ quarters (September 2009-September 2011) and were directly obtained from the Federal Deposit Insurance Corporation (FDIC) database (available at \texttt{www.fdic.gov}). The $p=21$ variables correspond to different assets (US and foreign government debt securities, equities, loans (commercial, mortgages), leases, etc.) and liabilities (domestic and foreign deposits from households and businesses, deposits from the Federal Reserve Board, deposits of other financial institutions, non-interest bearing liabilities, etc.) We have organized them into four categories: two for the assets (loans and securities) and two for the liabilities (Balances Due and Deposits, based on a \$250K reporting FDIC threshold). Amongst the 50 banks examined, one discerns large integrated ones with significant retail, commercial and investment activities (e.g. Citibank, JP Morgan, Bank of America, Wells Fargo), banks primarily focused on investment business (e.g. Goldman Sachs, Morgan Stanley, American Express, E-Trade, Charles Schwab), regional banks (e.g. Banco Popular de Puerto Rico, Comerica Bank, Bank of the West).

The raw data are reported in thousands of dollars. The few missing values were imputed using a nearest neighbor imputation method with $k=5$, by clustering them according to their total assets in the most recent quarter (September 2011) and subsequently every missing observation for a particular bank was imputed by the median observation on its five nearest neighbors. The data were log-transformed to reduce non-stationarity issues. The dataset was restructured as a  panel with $p=21$ variables and $n=50$ replicates observed over $T=9$ time points. Every column of replicates was  scaled to have unit variance.

We applied the proposed variants of NGC estimates on the first $T=6$ time points (Sep 2009 - Dec 2010) of the above panel dataset. The parameters $\lambda$ and $\delta_{grp}$ were chosen using a $19:1$ sample-splitting method and the misspecification threshold $\delta_{misspec}$ was set to zero as the grouping structure was reliable.  We calculated the MSE of the fitted model in predicting the outcomes in the four quarters (December 2010 - September 2011). The Predicted MSE (MSE for Dec 2010) are listed in  Table~\ref{MSE_bank}. The estimated network structures are shown in Figure~\ref{fig_bank_l}. 

It can be seen that the lasso estimates recover a very simple temporal structure amongst the
variables; namely, that past values (in this case lag-1) influence present ones. Given the
structure of the balance sheet of large banks, this is an anticipated result, since it can not
be radically altered over a short time period due to business relationships and past commitments
to customers of the bank.
However, the (adaptive) group lasso estimates reveal a richer and
more nuanced structure. Examining the
fitted values of the adjacency matrices $A^t$, we notice that the dominant effects remain those discovered by
the lasso estimates. However, fairly strong effects are also estimated within each group, but
also between the groups of the assets (loans and securities) on the balance sheet. This suggests
rebalancing of the balance sheet for risk management purposes between relatively low risk securities
and potentially more risky loans. Given the period covered by the data (post financial crisis starting in September 2009)
when credit risk management
became of paramount importance, the analysis picks up interesting patterns. On the other hand,
significant fewer associations are discovered between the liabilities side of the balance sheet.
Finally, there exist relationships between deposits and securities such as US Treasuries and other domestic
ones (primarily municipal bonds); the latter indicates that an effort on behalf of the banks to manage the credit risk of
their balance sheets, namely allocating to low risk assets as opposed to more risky loans.

It is also worth noting that the group lasso model exhibits superior predictive performance over the lasso estimates,
even 4 quarters into the future. Finally, in this case the thresholded estimates did not provide any additional benefits
over the regular and adaptive variants, given
that the specification of the groups was based on accounting principles and hence correctly structured.

\section{Discussion}\label{disc}

In this paper, the problem of estimating Network Granger Causal (NGC) models with inherent
grouping structure is studied when replicates are available. Norm, and both group level and within group variable selection consistency are established under fairly mild assumptions on the structure of the underlying time series. To achieve the second objective the novel concept of direction consistency is introduced.

The type of NGC models discussed in this study have wide applicability in different areas, including genomics and economics.
However, in many contexts the availability of replicates at each time point is not feasible (e.g. in rate of returns for stocks or other macroeconomic variables), while grouping structure is still present (e.g. grouping of stocks according to industry sector).
Hence, it is of interest to study the behavior of group lasso estimates in such a setting and address the technical challenges emanating from such a pure time series (dependent) data structure.

%
\appendix
\section{Auxiliary Lemmas}\label{app_lemma}
\begin{lem}[Characterization of the Group lasso estimate]\label{lem_kkt}
A vector  $\hat{\beta} \in \RR^p$ is a solution to the convex optimization problem
\begin{equation}\label{eqn:grplasso}
\argmin_{\beta \in \RR^p} \frac{1}{2n} \left\| Y - X \beta\right\|^2 + \sum_{g=1}^G \lambda_g \| \beta_{[g]} \|
\end{equation}
if and only if $\hat{\beta}$ satisfies, for some $\tau \in \RR^p$ with $\max_{1 \le g \le G} \left\| \tau_\gi \right\| \, \le 1$, 
$\frac{1}{n} \bbb{ X' (Y - X \hat{\beta})}_\gi = \lambda_g \, \tau_\gi \, \forall g$.
Further, $\tau_\gi = D \b{\hat{\beta}_\gi}$ whenever $\hat{\beta}_{[g]} \ne \mathbf{0}$.
\end{lem}
\begin{proof}
Follows directly from the KKT conditions for the optimization problem \eqref{eqn:grplasso}.
\end{proof}
\begin{lem}[Concentration bound for multivariate Gaussian]\label{talag}
Let $Z_{k \times 1} \sim N(0, \Sigma)$. Then, for any $t > 0$, the following inequalities hold:
\begin{equation*}
\PP[| \|Z\| - \EE\|Z \| | > t ] \le 2 \, \exp \left(-\frac{2t^2} {\pi^2 \| \Sigma \|}\right),~~~~\EE \nn{Z} \le \sqrt{k} \sqrt{\nn{\Sigma}}
\end{equation*}
\end{lem}
\begin{proof}
The first inequality can be found in \citet{talagrand} (equation (3.2). To establish the second inequality note that,
\begin{equation*}
\EE {\nn{Z}} \le \sqrt{\EE {\nn{Z}^2}} = \sqrt{\EE \bbb{\tr \b{Z Z'}}} = \sqrt{\tr \b{\Sigma}} \le \sqrt{k} \sqrt{\nn{\Sigma}}
\end{equation*}
\end{proof}

\begin{lem}\label{littlelemma1}
Let $\beta,~\hat{\beta} \in \mathbb{R}^m\backslash \{ \mathbf{0}\}$. Let $\hat{u} = \hat{\beta} - \beta$ and ${r} = D(\hat{\beta}) - D(\beta)$. Then $\left \| r \right \| < 2 \delta$ whenever $\left \| \hat{u}\right \| < \delta  \left \| \beta \right \|$.
\end{lem}
\begin{proof}
It follows from $\left \| \hat{u}\right \| < \delta \left \| \beta \right \|$ that
\begin{equation*}
(1-\delta) \| \beta \| < \| \beta \| - \| \hat{u} \| \le \| \hat{\beta} \| \le \|\hat{u} \| + \| \beta \| < (1+\delta) \| \beta \| \, ,
\end{equation*}
which implies that $\left | \| \beta \| - \| \hat{\beta} \| \right| < \delta \| \beta \|$. Now,
\begin{equation*}
{ \| \hat{\beta} \| \left \| \beta \right \|}\| r \| =  \left \| ~\hat{\beta} \| \beta\| + (\hat{u} - \hat{\beta}) \| \hat{\beta}\| ~ \right \|
\le \left \| \hat{\beta} \left(\| \beta \| - \| \hat{\beta} \| \right)  + \| \hat{\beta} \| ~\hat{u} ~  \right\|
<{ \| \hat{\beta} \| \left \| \beta \right \|} (\delta + \delta)
\end{equation*}
since $\left | \| \beta \| - \| \hat{\beta} \| \right| < \delta \| \beta \|$ and $\left \| \hat{u}\right \| < \delta \left \| \beta \right \|$.
\end{proof}
\section{Proof of Main Results}\label{app_main}
\begin{proof}[Proof of Proposition \ref{spectralresult}]
(a)  Note that $\Sigma$ is a $p(T-1) \times p(T-1)$ block Toeplitz matrix with $(i,j)^{th}$ block  $(\Sigma_{ij})_{1 \leq i, j \leq (T-1)} := \Gamma(i-j)$, where $\Gamma(\ell)_{p \times p}$ is the autocovariance function of lag $\ell$ for the zero-mean VAR(d) process \eqref{eqn1:NGCdefn}, defined as $\Gamma(\ell) = \EE [\mathbf{X}^t (\mathbf{X^{t-\ell}})']$.

We consider the cross spectral density of the VAR(d) process \eqref{eqn1:NGCdefn}
\begin{equation}\label{spectra}
f(\theta) =\frac{1}{2\pi} \displaystyle \sum_{\ell = -\infty}^{\infty} \Gamma(\ell) e^{-i\ell \theta}, ~~ \theta  \in [-\pi, \pi]
\end{equation}
From standard results of spectral theory we know that $\Gamma(\ell) = \int_{-\pi}^\pi e^{i\ell\theta} \, f(\theta) \, d\theta$, for every $\ell$.

We want to find a lower bound on the minimum eigenvalue of $\Sigma$, i.e., $ \inf_{\nn{x} = 1} x' \Sigma x$. Consider an arbitrary $p(T-1)$-variate unit norm vector $x$, formed by stacking the $p$-tuples $x^1, \dots, x^{T-1}$.

 For every $\theta \in [-\pi, \pi]$ define $G(\theta) = \sum_{t=1}^{T-1} x^t \,  e^{-it \theta}$ and note that
\begin{eqnarray*}
\int_{-\pi}^\pi G^* (\theta) G(\theta) \, d\theta &=& \displaystyle \sum_{t=1}^{T-1} \displaystyle \sum_{\tau = 1}^{T-1} (x^t)' (x^\tau) \displaystyle \int_{-\pi}^\pi e^{i(t-\tau) \theta} \, d\theta\\
&=& \displaystyle \sum_{t=1}^{T-1} \displaystyle \sum_{\tau = 1}^{T-1} (x^t)' (x^\tau) \, \, ( 2\pi \,\ind_{\{t=\tau\}})= 2\pi \,  \displaystyle \sum_{t=1}^{T-1}(x^t)' (x^t) = 2 \pi \, \nn{x}^2 = 2\pi
 \end{eqnarray*}
Also let $\mu(\theta)$  be the minimum eigenvalue  of the Hermitian matrix $f(\theta)$. Following  \cite{parter61extemeeigentoeplitz} we have the result
\begin{eqnarray*}
x' \Sigma x &=& \displaystyle \sum_{t=1}^{T-1} \displaystyle \sum_{\tau = 1}^{T-1} (x^t)' \Gamma(t-\tau) x^\tau 
= \displaystyle \sum_{t=1}^{T-1} \displaystyle \sum_{\tau = 1}^{T-1} (x^t)' \b{\int_{-\pi}^\pi e^{i(t-\tau) \theta} f(\theta) d\theta} x^\tau \\
&=& \int_{-\pi}^\pi \b{\displaystyle \sum_{t=1}^{T-1} (x^t)' e^{it\theta}} \, f(\theta) \, \b{\displaystyle \sum_{\tau=1}^{T-1} x^\tau e^{-i \tau \theta}} d\theta 
=  \int_{-\pi}^\pi G^*(\theta) \, f(\theta) \, G(\theta) \, d\theta \\
& \ge&  \int_{-\pi}^\pi \mu(\theta) \b{ G^*(\theta)  G(\theta)} \ d\theta 
\ge \displaystyle \b{\min_{\theta \in (-\pi, \pi)} \mu(\theta)} \,  \int_{-\pi}^\pi G^*(\theta) G(\theta) \, d\theta = 2\pi \, \displaystyle \min_{\theta \in (-\pi, \pi)} \mu(\theta)
\end{eqnarray*}
So $\Lambda_{min} (\Sigma) \ge 2\pi \, \displaystyle \min_{\theta \in (-\pi, \pi)} \mu(\theta)$.
If $A(z) = I - A^1z - A^2 z^2 - \ldots - A^d z^d $ is the (matrix-valued) characteristic polynomial of the VAR(d) model \eqref{eqn1:NGCdefn}, then we have the following representation of the spectral density (see eqn (9.4.23), \cite{Priestley2}):
\begin{equation*}
f(\theta) = \frac{1}{2\pi} \sigma^2 (A(e^{-i \theta}))^{-1} (A^*(e^{-i \theta}))^{-1}
\end{equation*}
Thus, $ 2\pi \mu(\theta) =2\pi  \Lambda_{min} (f(\theta)) = 2\pi/\Lambda_{max} (f(\theta)^{-1}) \ge \sigma^2/ \nn{A(e^{-i\theta})}$. But $\nn{A(e^{-i\theta})} \le 1+ \sum_{t=1}^d \nn{A^t}$ for every $\theta \in [-\pi, \pi]$. The result then follows at once from the standard matrix norm inequality \citep[see e.g.][Cor $2.3.2$]{matrixcomputations},
\begin{equation*}
\| A^t \|_2 \le \sqrt{\|A^t\|_1 \|A^t\|_{\infty}} \le \frac{\|A^t\|_1 + \|A^t\|_{\infty}}{2}~~t=1, \ldots, d
\end{equation*}
since 
\begin{equation*}
\|A^t\|_1  = \displaystyle \max_{1 \le i \le p } \displaystyle \sum_{j=1}^p |A^t_{ij}|,~~
\|A^t\|_{\infty}  = \displaystyle \max_{1 \le j \le p } \displaystyle \sum_{i=1}^p |A^t_{ij}|
\end{equation*}
(b) The first part of the result guarantees that $\Sigma^{1/2}$ satisfies RE(s) for any $s>0$ with $\phi^2_{RE}(\Sigma^{1/2}) \ge \Lambda_{min}(\Sigma) \ge m^2$. The second part of the proposition is a straightforward adaptation of \citep[Cor $1$]{raskutti2010REcorrgauss} tailored for group lasso penalty.
\end{proof}
\begin{proof}[Proof of Theorem \ref{selectconsist}]
Consider any solution $\hat{\beta}_R \in \RR^q$ of the restricted regression
\begin{equation}\label{restrictgrp}
\displaystyle \argmin_{\beta \in \mathbb{R}^q} \frac{1}{2n} \left \| \mathbf{Y} - {X}_{(1)} \beta \right \|_2^2 + \lambda \displaystyle \sum_{g=1}^s \left \| \beta_{[g]} \right \|_2
\end{equation}
and set $\hat{\beta} = \bbb{ \hat{\beta}_R' \, :  \mathbf{0}_{1 \times (p-q)} }' $.  We show that such an augmented vector $\hat{\beta}$ satisfies the statements of Theorem \ref{selectconsist} with high probability.

Let $\hat{u} = \hat{\beta}_{(1)} - \beta^0_{(1)} = \hat{\beta}_R - \beta^0_{(1)}$. In view of lemmas \ref{lem_kkt} and \ref{littlelemma1}, it   suffices to show that the following events happen with probability at least $1 - 4\, G^{1-\alpha}$:
\begin{eqnarray}
&~& \nn{ \hat{u}_\gi } < \delta_n \nn{\beta^0_\gi} \mbox{, for all $g \in S$} \label{(A)}\\
&~& \frac{1}{n} \nn{ \bbb{X' \b{\epsilon - X_{(1)} \hat{u}}}_\gi  } \le \lambda \mbox{, for all $g \notin S$}\label{(B)}
\end{eqnarray}
Note that, in view of Lemma~\ref{lem_kkt},  $\hat{u} = \b{C_{11}}^{-1} \b{\frac{{1}}{\sqrt{n}}Z_{(1)} - \lambda \tau }$ for some $\tau \in \RR^q$ with $\nn{\tau_\gi} \le 1$ for all $g \in S$, and $Z = \frac{1}{\sqrt{n}} X' \epsilon = \bbb{ Z_{(1)}': Z_{(2)}' }'$.
Thus, for any $g \in S$,
\begin{eqnarray*}
\PP \b{ \nn{\hat{u}_\gi} > \delta_n \nn{\beta^0_\gi}  } \le \PP \b{ \nn{ \bbb{\b{C_{11}}^{-1} \b{\frac{{1}}{\sqrt{n}}Z_{(1)} - \lambda \tau }}_\gi } >  \delta_n \nn{\beta^0_\gi} }\\
\le \PP \b{ \nn{ \bbb{\b{C_{11}}^{-1} Z_{(1)}}_\gi } > \sqrt{n} \bbb{\delta_n \nn{{\beta^0}_\gi} - \lambda \nn{\bbb{\b{C_{11}}^{-1}\tau}_\gi}}  }
\end{eqnarray*}
Note that $V = \b{C_{11}} \inv Z_{(1)} \sim N(\mathbf{0}, \sigma^2 \b{C_{11}}\inv)$. So $V_\gi \sim N(\mathbf{0}, \sigma^2  C_{11}^{\gi \gi})$, where $\Sigma^{[g][g]}: = (\Sigma \inv)_{[g][g]}$.
Also, by the second statement of lemma \ref{talag} we have $\EE \nn{V_{[g]}} \le \sigma \sqrt{k_g} \sqrt{\nn{C^{[g][g]}_{11}}}$. 
Therefore $\PP \b{ \nn{\hat{u}_\gi} > \delta_n \nn{\beta^0_\gi} }$ is bounded above by 
\begin{eqnarray*}
\PP \b{ \n{ \nn{V_{[g]}} - \EE \nn{V_\gi} }  > \sqrt{n} \bbb{\delta_n \nn{{\beta^0}_\gi} - \lambda \nn{\b{C_{11}}^{-1}}\sqrt{s}} - \sigma \sqrt{k_g \nn{C_{11}^{[g][g]}}}  }\\
\le 2 \exp \bbb{-\frac{2}{\pi^2 \sigma^2 \|  C^{[g][g]}_{11}\|} \b{  \sqrt{n} \delta_n \| \beta^0_\gi \| - \sqrt{n} \lambda \| C^{-1}_{11} \| \sqrt{s} -\sigma \sqrt{k_g \| C^{[g][g]}_{11}\|} }^2 } \hspace{1.1cm}
\end{eqnarray*}
For the proposed choice of $\delta_n$, this expression is bounded above by $2 \, G^{-\alpha}$.\\
\noindent Next, for any $g \notin S$, we get
\begin{eqnarray*}
\PP \b{\frac{1}{n} \nn{ \bbb{X' \b{\epsilon - X_{(1)} \hat{u}}}_\gi  } > \lambda }  \hspace{7cm}\\   
\le \PP \b{  \nn{\bbb{Z_{(2)} - C_{21} C_{11}\inv Z_{(1)}}_\gi} > \sqrt{n}{\lambda} \b{1 - \nn{\bbb{C_{21} C_{11}\inv \tau}_\gi}}  }
\end{eqnarray*}
Defining $W = Z_{(2)} - C_{21} C_{11}\inv Z_{(1)} \sim N(\mathbf{0},\sigma^2 ( C_{22} - C_{21} C_{11}\inv C_{12}))$,
the uniform irrepresentable condition implies that the above probability is bounded above by $\PP \b{\nn{W_\gi} > \sqrt{n} \lambda \eta}$.

It can then be seen that $W_\gi \sim N(\mathbf{0},\sigma^2  \bar{C}_{[g][g]})$, where $\bar{C} = C_{22} - C_{21} C_{11}\inv C_{12}$ denotes the Schur complement of $C_{22}$.
As before, lemma \ref{talag} establishes that
\begin{eqnarray*}
\PP \b{\nn{W_\gi} > \sqrt{n} \lambda \eta} &\le& \PP \b{\n{ \nn{W_\gi} - {\EE \nn{W_\gi}}}    >    \sqrt{n} \lambda \eta - \sigma \sqrt{k_g \|\bar{C}_{[g][g]} \|}}\\
&\le& 2 \exp \bbb{-\frac{2}{\pi^2 \| \sigma^2 \bar{C}_{[g][g]}\|} \b{\sqrt{n} \lambda \eta - \sigma \sqrt{k_g \|\bar{C}_{[g][g]} \|} }^2 },
\end{eqnarray*}
and the last probability is bounded above by $2G^{-\alpha}$ for the proposed choice of $\lambda$.\\
The results in the proposition follow by considering the union bound on the two sets of the probability statements made across all $g \in \NN_G$.
\end{proof}

%
%

\section{Proof of results on $\ell_2$-consistency}\label{app_l2}
We first note that each of the $p$ optimization problems in (\ref{eqn:NGCest}) is essentially a generic group lasso regression on $n$ independent samples from a linear model $Y = X \beta^0 + \epsilon$, $\epsilon \sim N(0, \sigma^2)$:
\begin{equation}\label{genericgrplasso}
\hat{\beta} =  \displaystyle \argmin_{\beta \in \mathbb{R}^p} \frac{1}{2n} \| \mathbf{Y} - \mathbf{X} \mathbf{\beta} \|^2 +  \displaystyle \sum_{g=1}^{\bar{G}} \lambda_g \|  \mathbf{\beta}_{[g]}\| 
\end{equation}
where $\mathbf{Y}_{n \times 1} = \mathcal{X}^T_i$, $\mathbf{X}_{n \times \bar{p}} = [ \mathcal{X}^1: \cdots: \mathcal{X}^{T-1}]$, $\mathbf{\beta}^0_{\bar{p} \times 1} = vec(A^{1 :(T-1)}_{i:})$,  $\{ 1, \ldots, \bar{p}\} = \displaystyle \cup_{g=1}^{\bar{G}}  \mathcal{G}_g$, $\bar{p} = (T-1)p$, $\bar{G} = (T-1)G$ and $\lambda_g  = \lambda w^t_{i, g}$.
We provide the proofs in the context of a generic group lasso penalized regression problem.

Recall the Restricted Eigenvalue assumption required for the derivation of $\ell_2$ estimation and prediction error. Following \citet{vandegeerconditions}, we introduce a slightly weaker notion called \textbf{Group Compatibility} (GC). For a constant $L>0$ we say that GC(S, L) condition holds, if there exists a constant\\ $\phi_{compatible} = \phi_{compatible}(S, L) > 0$ such that
\begin{equation}\label{compatible}
\min_{\Delta \in \mathbb{R}^p \backslash \{\mathbf{0}\}} \left \{ \frac{\left( \sum_{g \in S} \lambda_g^2 \right)^{1/2} \| X \Delta\| }{\sqrt{n} \displaystyle \sum_{g \in S} \lambda_g \| \Delta_{[g]} \| } : \displaystyle \sum_{g \notin S} \lambda_g \| \Delta_{[g]}\|  \le L \displaystyle \sum_{g \in S} \lambda_g \| \Delta_{[g]} \| \right \} \ge \phi_{compatible}
\end{equation}
The fact that GC(S, L) holds whenever RE(s, L) is satisfied (and $\phi_{RE} \le \phi_{compatible}$) follows at once from Cauchy Schwarz inequality. 
We shall derive upper bounds on the prediction and $\ell_{2,1}$ estimation error of group lasso estimates involving the compatibility constant. This notion will also be used later to connect the irrepresentable conditions to the consistency results of group lasso estimators.

\begin{prop}\label{compatible2consist}
Suppose the GC condition \eqref{compatible} holds with $L=3$. Choose $\alpha > 0$ and denote $\lambda_{min} = \min_{1 \le g \le G} \lambda_g$. If
$$
\lambda_g \ge \frac{2 \sigma}{\sqrt{n}} \sqrt{\nn{C_{[g] [g]}}} \left(  \sqrt{k_g} + \frac{\pi}{\sqrt{2}} \sqrt{\alpha \, \log\,G} \right)
$$
for every $g \in \mathbb{N}_{G}$, then, the following statements hold with probability at least $1 - 2G^{1-\alpha}$,
\begin{eqnarray}
&~& \frac{1}{n} \left\| X \left(\hat{\beta} - \beta^0 \right) \right\|^2 \le \frac{16}{\phi^2_{compatible}} \sum_{g=1}^s \lambda^2_g\label{eqpred}\\
&~& \| \hat{\beta} - \beta^0 \|_{2, 1} \le \frac{16}{\phi_{compatible}^2} \,  \frac{\sum_{g=1}^s \lambda_g ^2}{\lambda_{min}}. \label{eql21}
\end{eqnarray}
If, in addition, RE(2s, 3) holds, then, with the same probability we get
\begin{equation}\label{re2stol2consist}
\| \hat{\beta} - \beta^0 \| \le \frac{4\sqrt{10}}{\phi_{RE}^2 (2s)}  \, \frac{\sum_{g=1}^s \lambda_g^2}{\lambda_{min} \, \sqrt{s }} \, .
\end{equation}
\end{prop}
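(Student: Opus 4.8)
The plan is to run the standard penalized least-squares ``basic inequality'' argument in its group-lasso form, working conditionally on the design $\mathbf{X}$ (so that $C=\mathbf{X}'\mathbf{X}/n$ and the blocks $C_{[g][g]}$ are deterministic, while $\mathbf{X}'\mathbf{\epsilon}$ is Gaussian). Write $\Delta:=\hat{\beta}-\beta^0$. Since $\hat{\beta}$ minimizes the objective in \eqref{genericgrplasso} and $\mathbf{Y}=\mathbf{X}\beta^0+\mathbf{\epsilon}$, comparing the objective value at $\hat\beta$ with that at $\beta^0$ and rearranging gives
\[
\frac{1}{2n}\|\mathbf{X}\Delta\|^2 \;\le\; \frac{1}{n}\mathbf{\epsilon}'\mathbf{X}\Delta \;+\; \sum_{g=1}^{G}\lambda_g\bigl(\|\beta^0_{[g]}\|-\|\hat\beta_{[g]}\|\bigr),
\]
and splitting the first term group-wise and applying Cauchy--Schwarz bounds it by $\sum_{g}\frac1n\|\mathbf{X}_{[g]}'\mathbf{\epsilon}\|\,\|\Delta_{[g]}\|$.

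The probabilistic ingredient is the event $\mathcal{A}=\{\frac1n\|\mathbf{X}_{[g]}'\mathbf{\epsilon}\|\le\frac12\lambda_g\text{ for all }g\in\mathbb{N}_G\}$. Conditionally on $\mathbf{X}$ we have $\mathbf{X}_{[g]}'\mathbf{\epsilon}\stackrel{d}{=}\sigma(\mathbf{X}_{[g]}'\mathbf{X}_{[g]})^{1/2}\mathbf{g}$ with $\mathbf{g}\sim N(\mathbf{0},\mathbf{I}_{k_g})$, hence $\frac1n\|\mathbf{X}_{[g]}'\mathbf{\epsilon}\|\le\sigma\sqrt{\|C_{[g][g]}\|/n}\,\|\mathbf{g}\|$ with $\|\mathbf{g}\|\sim\chi_{k_g}$. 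Using the Gaussian tail bound $\mathbb{P}(\|\mathbf{g}\|\ge\sqrt{k_g}+t)\le e^{-t^2/2}$ and the form of the prescribed $\lambda_g$, the $g$-th constraint is violated with probability at most $G^{-\pi^2\alpha/4}$; since $\pi^2/4>1$, a union bound over the $G$ groups gives $\mathbb{P}(\mathcal{A})\ge 1-2G^{1-\alpha}$. Everything below is argued on $\mathcal{A}$.

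On $\mathcal{A}$ the stochastic term is at most $\frac12\sum_{g}\lambda_g\|\Delta_{[g]}\|$; substituting this into the basic inequality, using $\beta^0_{[g]}=\mathbf{0}$ for $g\notin S$ and $\|\beta^0_{[g]}\|-\|\hat\beta_{[g]}\|\le\|\Delta_{[g]}\|$ for $g\in S$, yields the cone inequality
\[
\frac{1}{2n}\|\mathbf{X}\Delta\|^2 + \frac12\sum_{g\notin S}\lambda_g\|\Delta_{[g]}\| \;\le\; \frac32\sum_{g\in S}\lambda_g\|\Delta_{[g]}\|.
\]
In particular $\sum_{g\notin S}\lambda_g\|\Delta_{[g]}\|\le 3\sum_{g\in S}\lambda_g\|\Delta_{[g]}\|$, so $\Delta$ lies in the cone of \eqref{compatible} with $L=3$, and GC$(S,3)$ gives $\sum_{g\in S}\lambda_g\|\Delta_{[g]}\|\le(\sum_{g\in S}\lambda_g^2)^{1/2}\|\mathbf{X}\Delta\|/(\sqrt n\,\phi_{compatible})$. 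Feeding this back into the cone inequality produces a quadratic inequality for $\|\mathbf{X}\Delta\|/\sqrt n$, whose solution yields \eqref{eqpred}; and \eqref{eql21} follows from $\|\Delta\|_{2,1}\le\lambda_{min}^{-1}\sum_g\lambda_g\|\Delta_{[g]}\|\le 4\lambda_{min}^{-1}\sum_{g\in S}\lambda_g\|\Delta_{[g]}\|$ (the cone inequality once more), combined with the GC bound and \eqref{eqpred}.

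For \eqref{re2stol2consist} one invokes RE$(2s,3)$ and the classical block decomposition of $\Delta$ on $S^c$: order the groups outside $S$ so that $\|\Delta_{[g]}\|$ is non-increasing, let $J_1$ hold the $s$ largest, $J_2$ the next $s$, and so on; then $\|\Delta_{[J_{k+1}]}\|\le s^{-1/2}\sum_{g\in J_k}\|\Delta_{[g]}\|$, which with the cone inequality and $\lambda_g\ge\lambda_{min}$ yields $\sum_{k\ge2}\|\Delta_{[J_k]}\|\le 3(\lambda_{min}\sqrt s)^{-1}\sum_{g\in S}\lambda_g\|\Delta_{[g]}\|$. The bulk block $\Delta_{[S\cup J_1]}$ involves at most $2s$ groups and still satisfies the restriction in \eqref{RElounici} (again by the cone inequality), so RE$(2s,3)$ gives $\|\Delta_{[S\cup J_1]}\|\le\|\mathbf{X}\Delta\|/(\sqrt n\,\phi_{RE}(2s))$; then $\|\Delta\|\le\|\Delta_{[S\cup J_1]}\|+\sum_{k\ge2}\|\Delta_{[J_k]}\|$, and bounding both terms through \eqref{eqpred} and the GC bound — using RE$(2s,3)\Rightarrow$ GC$(S,3)$, hence $\phi_{RE}(2s)\le\phi_{compatible}$, so \eqref{eqpred} may be read with $\phi_{RE}^2(2s)$ in the denominator — together with $(\sum_{g\in S}\lambda_g^2)^{1/2}\le(\lambda_{min}\sqrt s)^{-1}\sum_{g\in S}\lambda_g^2$, collapses the bound to a multiple of $\sum_{g\in S}\lambda_g^2/(\lambda_{min}\sqrt s\,\phi_{RE}^2(2s))$, i.e.\ \eqref{re2stol2consist}. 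I expect the two delicate points to be calibrating the probability bound so that it lands at exactly $2G^{1-\alpha}$ for the \emph{stated} $\lambda_g$ (tracking $\|C_{[g][g]}\|$, $\sqrt{k_g}$ and the $\pi/\sqrt2$ factor against the Gaussian tail), and the block-decomposition bookkeeping for \eqref{re2stol2consist} — verifying the RE restriction on a set of size at most $2s$ and carrying the group-specific weights $\lambda_g$ through the tail sums while keeping all constants explicit.
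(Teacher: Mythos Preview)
Your proposal is correct and follows essentially the same route as the paper: the basic inequality, the high-probability event $\mathcal{A}=\{\frac{2}{n}\|(\mathbf{X}'\mathbf{\epsilon})_{[g]}\|\le\lambda_g\ \forall g\}$, the resulting cone inequality $\sum_{g\notin S}\lambda_g\|\Delta_{[g]}\|\le 3\sum_{g\in S}\lambda_g\|\Delta_{[g]}\|$, and then the GC$(S,3)$ condition to close the loop for \eqref{eqpred} and \eqref{eql21}. The only noteworthy difference is in the concentration step: the paper applies a Talagrand-type inequality (Lemma~\ref{talag}) directly to the Gaussian vector $Z_{[g]}=\frac{1}{\sqrt n}(\mathbf{X}'\mathbf{\epsilon})_{[g]}\sim N(\mathbf{0},\sigma^2 C_{[g][g]})$, obtaining exactly $2G^{-\alpha}$ per group, whereas you first bound $\|C_{[g][g]}^{1/2}\mathbf{g}\|\le\sqrt{\|C_{[g][g]}\|}\,\|\mathbf{g}\|$ and then use the sharper $\chi$-tail bound $\mathbb{P}(\|\mathbf{g}\|\ge\sqrt{k_g}+t)\le e^{-t^2/2}$, landing at $G^{-\pi^2\alpha/4}\le G^{-\alpha}$; both are valid and yours is in fact slightly tighter. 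For \eqref{re2stol2consist} the paper simply cites Theorem~3.1 of Lounici et al.\ (2011), whose content is precisely the block-decomposition argument you outline.
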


\begin{proof}[Proof of Proposition \eqref{compatible2consist}] Since $\hat{\beta}$ is a solution of the optimization problem \eqref{genericgrplasso},  for all $\beta \in \RR^p$, we have
\begin{equation*}
\frac{1}{n} \| Y - X\hat{\beta} \|^2 + 2 \sum_{g=1}^{G} \lambda_g \|\hat{\beta}_{[g]} \| \le \frac{1}{n} \| Y - X{\beta} \|^2 + 2 \sum_{g=1}^{G} \lambda_g \|{\beta}_{[g]} \|.
\end{equation*}
Plugging in $Y = X \beta^0 + \epsilon$, and simplifying the resulting equation, we get
\begin{eqnarray*}
&~& \frac{1}{n} \| X (\hat{\beta} - \beta^0) \|^2 \le \frac{1}{n} \| X(\beta - \beta^0) \|^2 + \frac{2}{n} \sum_{g=1}^G \left\| (X' \epsilon)_{[g]}\right\| \left\| (\hat{\beta} - \beta )_{[g]}\right\| \\
&~&\hspace{1.5in}+ 2 \sum_{g=1}^G \lambda_g \left(\| \beta_{[g]}  \|- \| \hat{\beta}_{[g]}\| \right).
\end{eqnarray*}
Fix $g \in \mathbb{N}_G$ and consider the event $\calA_g = \bb{\epsilon \in \RR^n : \frac{2}{n} \nn{\b{X' \epsilon}_\gi} \le \lambda_g}$. Note that $Z = \frac{1}{\sqrt{n}} X' \epsilon \sim N(\mathbf{0},\sigma^2  C)$. So $Z_\gi \sim N(\mathbf{0}, \sigma^2 C_{\gi \gi})$. Then,
\begin{eqnarray*}
\PP \b{\calA^c_g} &=& \PP \b{\nn{Z_\gi} > \frac{1}{2} \lambda_g \sqrt{n}}\\
&\le& \PP \b{\n{Z_\gi - \E \nn{Z_\gi}} > \frac{\lambda_g \sqrt{n}}{2} - \sigma \sqrt{k_g} \sqrt{\nn{C_{\gi \gi}}}},
\end{eqnarray*}
where the last inequality follows from the second statement of Lemma~\ref{talag}.
Now, let $x_g = \frac{\lambda_g \sqrt{n}}{2} - \sigma \sqrt{k_g} \sqrt{\nn{C_{\gi \gi}}}$. Then, for $x_g > 0$, if
$$
2 \exp \b{-\frac{2 \, x_g^2}{\pi^2 \sigma^2 \nn{C_{\gi \gi}}}} \le 2\, G^{-\alpha} \, ,
$$
we get
$$
\PP \b{\calA^c_g} \le 2G^{-\alpha}.
$$
But this happens if,
$$
\sqrt{2} x_g \ge \sqrt{\alpha \, \log\,G} \pi \sigma \sqrt{ \nn{C_{\gi \gi}}},
$$
which is ensured by the proposed choice of $\lambda_g$.

Next, define $\calA := \cap_{g=1}^G \calA_g$. Then, $\PP \b{\calA} \ge 1  -2 G^{1-\alpha}$, and on the event $\calA$, we have, for all $\beta \in \RR^p$,
\begin{eqnarray}
 &~& \frac{1}{n} \| X (\hat{\beta} - \beta^0) \|^2 + \sum_{g=1}^G \lambda_g \nn{\hat{\beta}_{\gi } - \beta_{\gi }} \le \frac{1}{n} \| X(\beta - \beta^0) \|^2  \nonumber \\
&~& \hspace{1.5in} + 2 \sum_{g=1}^G \lambda_g \b{ \nn{\hat{\beta}_{\gi} - \beta_{\gi}} + \nn{\beta_{\gi}} - \nn{\hat{\beta}_{\gi}} }. \nonumber
\end{eqnarray}
Note that $\b{ \nn{\hat{\beta}_{\gi} - \beta_{\gi}} + \nn{\beta_{\gi}} - \nn{\hat{\beta}_{\gi}} }$ vanishes if $g \notin S$ and is bounded above by \\ $\min \{ 2 \nn{\beta_{\gi}}, 2 \b{\nn{\beta_\gi - \hat{\beta}_\gi}}\}$ if $g \in S$.

This leads to the following sparsity oracle inequality, for all $\beta \in \RR^p$,
\begin{eqnarray}\label{oracle}
&~& \frac{1}{n} \| X (\hat{\beta} - \beta^0) \|^2 + \sum_{g=1}^G \lambda_g \nn{\hat{\beta}_{\gi} - \beta_{\gi}} \le \frac{1}{n} \| X(\beta - \beta^0) \|^2  \nonumber \\
&~& \hspace{1.5in} + 4 \sum_{g \in S} \lambda_g \min \bb{{ \nn{ \beta_{\gi}} , \, \nn{\beta_{\gi} - \hat{\beta}_{\gi}} } }.
\end{eqnarray}


The sparsity oracle inequality \eqref{oracle} with $\beta = \beta^0$, and $\Delta := \hat{\beta} -\beta^0$ leads to the following two useful bounds on the prediction and $\ell_{2,1}$-estimation errors:
\begin{eqnarray}\label{keyineq}
\frac{1}{n} \nn{ X \Delta}^2 \le 4\sum_{g \in S} \lambda_g \nn{\Delta_{\gi}} \\
\sum_{g \notin S} \lambda_g \nn{\Delta_\gi} \le 3 \sum_{g \in S} \lambda_g \nn{\Delta_\gi}.
\end{eqnarray}
Now, assume the group compatibility condition~\ref{compatible} holds. Then,
\begin{eqnarray}\label{tuktak}
\frac{1}{n} \nn{X \Delta}^2 \le 4 \sum_{g \in S} \lambda_g \nn{\Delta_\gi} \le \sqrt{\sum_{g \in S} \lambda_g^2} \, \frac{\nn{X \Delta}}{\sqrt{n}}\, \frac{4}{\phi_{compatible}},
\end{eqnarray}
which implies the first inequality of proposition \ref{compatible2consist}. The second inequality follows from
\begin{eqnarray*}
\lambda_{min} \nn{\hat{\beta} -\beta}_{2, 1} &\le& \sum_{g=1}^G \lambda_g \nn{\Delta_\gi} \le 4 \sum_{g \in S} \lambda_g \nn{\Delta_\gi} \\
& \le&  4 \sqrt{\sum_{g \in S} \lambda_g^2} \, \frac{\nn{X \Delta}}{\sqrt{n}} \, \frac{1}{\phi_{compatible}} \le \frac{16} {\phi^2_{compatible}} \sum_{g \in S} \lambda_g^2\, ,
\end{eqnarray*}
where the last step uses \eqref{tuktak}.

The proof of the last inequality of proposition \ref{compatible2consist}, i.e., the upper bound on $\ell_2$ estimation error under $RE(2s)$, is the same as in Theorem 3.1 in \citet{lounici2011} and is omitted.
\end{proof}

\section{Irrepresentable assumptions and consistency}\label{app_selection}
In this section, we discuss two results involving the compatibility and irrepresentable conditions for group lasso. We first show that a stronger version of the uniform irrepresentable assumption implies the group compatibility \eqref{compatible}, and hence, consistency in $\ell_{2,1}$ norm. Next we argue that a weaker version of the irrepresentable assumption is indeed necessary for the direction consistency of the group lasso estimates. These results generalize analogous properties of lasso \citep{vandegeerconditions,Zhaoyu06} to the group penalization framework. The proofs are given under a special choice of tuning parameter $\lambda_g = \lambda \sqrt{k_g}$. Similar results can be derived for the general choice of $\lambda_g$, although their presentation is more involved.

\begin{prop}\label{irep2compatible}
Assume uniform irrepresentable condition \eqref{unifirrep} holds with $\eta  \in (0, 1)$, and $\Lambda_{min}(C_{11}) > 0$. Then  group compatibility(S, L) \eqref{compatible} condition holds whenever $L < \frac{1}{1-\eta}$.
\end{prop}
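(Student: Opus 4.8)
The plan is to mimic the proof of Theorem~9.1 in \citet{vandegeerconditions}, adapted to the group setting and to the tuning-parameter choice $\lambda_g = \lambda\sqrt{k_g}$. We must show that for any $\Delta \in \RR^p \backslash \{\mathbf{0}\}$ satisfying the cone constraint $\sum_{g \notin S} \lambda_g \| \Delta_{[g]}\| \le L \sum_{g \in S} \lambda_g \| \Delta_{[g]}\|$ with $L < 1/(1-\eta)$, the ratio appearing in \eqref{compatible} is bounded below by a positive constant; equivalently, it suffices to bound $\sum_{g \in S} \lambda_g \| \Delta_{[g]}\|$ from above by a constant times $\left(\sum_{g \in S}\lambda_g^2\right)^{1/2} \| X\Delta\|/\sqrt{n}$. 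The key device is to introduce the "projected" coefficient $\gamma$ defined on the signal block by $\gamma_{(1)} = \Delta_{(1)} + (C_{11})^{-1} C_{12} \Delta_{(2)}$ (and $\gamma_{(2)} = 0$), so that $X_{(1)}\gamma_{(1)}$ is the least-squares fit of $X\Delta$ onto the column space of $X_{(1)}$. Then $\|X\Delta\|^2 = \|X_{(1)}\gamma_{(1)}\|^2 + \|X_{(2)}\Delta_{(2)} - C_{21}(C_{11})^{-1}\cdot(\text{something})\|^2 \ge \|X_{(1)}\gamma_{(1)}\|^2$, and moreover $\frac{1}{n}\|X_{(1)}\gamma_{(1)}\|^2 = \gamma_{(1)}' C_{11} \gamma_{(1)}$.

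The first main step is to show that the irrepresentable condition forces $\gamma_{(1)}$ to be "not much smaller" than $\Delta_{(1)}$ in the relevant weighted $\ell_{2,1}$-type norm. Concretely, using $\gamma_{(1)} - \Delta_{(1)} = (C_{11})^{-1}C_{12}\Delta_{(2)}$ and applying the uniform irrepresentable condition \eqref{unifirrep} to the vector $\tau$ with $\tau_{[g]} = \Delta_{[g]}/\|\Delta_{[g]}\|_{2,\infty}$ for $g \notin S$ (so that $\|\tau\|_{2,\infty}\le 1$), together with the identity $C_{12}\Delta_{(2)} = \sum_{g \notin S}C_{:[g]}\Delta_{[g]}$ and the scaling $\lambda_g = \lambda\sqrt{k_g}$ which makes $K = \lambda\,\mathrm{Diag}(\sqrt{k_g}\,\mathbf{I}_{k_g})$, I expect to obtain a bound of the shape
\[
\sum_{g\in S}\lambda_g\|(\gamma_{(1)}-\Delta_{(1)})_{[g]}\| \;\le\; (1-\eta)\sum_{g\notin S}\lambda_g\|\Delta_{[g]}\|.
\]
Combining this with the cone constraint $\sum_{g\notin S}\lambda_g\|\Delta_{[g]}\| \le L\sum_{g\in S}\lambda_g\|\Delta_{[g]}\|$ and the triangle inequality gives
\[
\sum_{g\in S}\lambda_g\|\gamma_{[g]}\| \;\ge\; \big(1 - (1-\eta)L\big)\sum_{g\in S}\lambda_g\|\Delta_{[g]}\|,
\]
and the hypothesis $L < 1/(1-\eta)$ makes the constant $1-(1-\eta)L$ strictly positive.

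The second step is to convert a lower bound on $\gamma_{(1)}' C_{11}\gamma_{(1)} = \frac{1}{n}\|X_{(1)}\gamma_{(1)}\|^2 \le \frac1n\|X\Delta\|^2$ into the desired compatibility inequality. Here I would apply Cauchy--Schwarz in the form $\sum_{g\in S}\lambda_g\|\gamma_{[g]}\| \le \left(\sum_{g\in S}\lambda_g^2\right)^{1/2}\left(\sum_{g\in S}\|\gamma_{[g]}\|^2\right)^{1/2} = \left(\sum_{g\in S}\lambda_g^2\right)^{1/2}\|\gamma_{(1)}\|$, and then bound $\|\gamma_{(1)}\|$ via the smallest eigenvalue of $C_{11}$: $\|\gamma_{(1)}\|^2 \le \Lambda_{\min}(C_{11})^{-1}\gamma_{(1)}'C_{11}\gamma_{(1)}$. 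Chaining these yields
\[
\big(1-(1-\eta)L\big)\sum_{g\in S}\lambda_g\|\Delta_{[g]}\| \;\le\; \Big(\sum_{g\in S}\lambda_g^2\Big)^{1/2}\,\Lambda_{\min}(C_{11})^{-1/2}\,\frac{\|X\Delta\|}{\sqrt n},
\]
so \eqref{compatible} holds with $\phi_{compatible} = \big(1-(1-\eta)L\big)\sqrt{\Lambda_{\min}(C_{11})}$. (If one prefers not to invoke $\Lambda_{\min}(C_{11})$, one can instead keep the argument entirely in the $\gamma$ variable and quote the fact that $C_{11}$ is invertible, which is implicit once the irrepresentable conditions are stated; alternatively a group restricted-eigenvalue-type bound on the signal block can be used.)

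The step I expect to be the main obstacle is the first one: correctly setting up the test vector $\tau$ and tracking the weights so that the irrepresentable bound \eqref{unifirrep}, which is phrased coordinate-block-wise as $\frac{1}{\lambda_g}\|[C_{21}(C_{11})^{-1}K\tau]_{[g]}\| < 1-\eta$ for $g \notin S$, can be summed against $\|\Delta_{[g]}\|$ and turned into the clean inequality $\sum_{g\in S}\lambda_g\|(\gamma_{(1)}-\Delta_{(1)})_{[g]}\| \le (1-\eta)\sum_{g\notin S}\lambda_g\|\Delta_{[g]}\|$. The subtlety is that \eqref{unifirrep} controls the projection \emph{from} the non-signal blocks \emph{onto} the signal blocks indexed by $g\in S$, exactly the quantity $\gamma_{(1)}-\Delta_{(1)} = (C_{11})^{-1}C_{12}\Delta_{(2)}$ after absorbing the matrix $K$ via the $\lambda_g=\lambda\sqrt{k_g}$ normalization and a duality/homogeneity argument over the unit $\|\cdot\|_{2,\infty}$ ball; getting the direction of the inequality and the role of $K$ versus the $\lambda_g$'s exactly right is where care is needed. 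Everything after that is Cauchy--Schwarz and an eigenvalue bound.
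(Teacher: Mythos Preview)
Your overall strategy---project $X\Delta$ onto the column space of $X_{(1)}$, introduce $\gamma_{(1)}=\Delta_{(1)}+(C_{11})^{-1}C_{12}\Delta_{(2)}$, and then finish with Cauchy--Schwarz plus an eigenvalue bound on $C_{11}$---is sound and leads to the same constant $\phi_{compatible}=(1-(1-\eta)L)\sqrt{\Lambda_{\min}(C_{11})}$ that the paper obtains. However, the route is genuinely different from the paper's, and your execution of the key step is confused about the direction of the irrepresentable condition.

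The paper does \emph{not} work with an arbitrary $\Delta$ in the cone. It passes to the constrained minimizer $\Delta^0$ of $\frac{1}{2n}\|X\Delta\|^2$ subject to $\sum_{g\in S}\sqrt{k_g}\|\Delta_{[g]}\|=1$ and $\sum_{g\notin S}\sqrt{k_g}\|\Delta_{[g]}\|\le L$, writes the KKT conditions, and thereby produces a Lagrange multiplier $\lambda$ and a subgradient $\tau^0\in\RR^q$ with $\|\tau^0_{[g]}\|\le 1$ for $g\in S$. The irrepresentable condition is then applied \emph{in its stated direction}---with $\tau^0$ living in the signal block, yielding bounds on the non-signal blocks of $C_{21}(C_{11})^{-1}K^0\tau^0$---at two places: once to lower-bound $\lambda$ via $1\le(1-\eta)L+\lambda q/\Lambda_{\min}(C_{11})$, and once, together with positive semidefiniteness of the Schur complement $C_{22}-C_{21}C_{11}^{-1}C_{12}$, to show $\frac{1}{n}\|X\Delta^0\|^2\ge\lambda(1-(1-\eta)L)$. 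No duality is invoked.

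Your argument needs the transposed inequality: you want $\sum_{g\in S}\lambda_g\|[(C_{11})^{-1}C_{12}\Delta_{(2)}]_{[g]}\|\le(1-\eta)\sum_{g\notin S}\lambda_g\|\Delta_{[g]}\|$. You describe this as ``applying \eqref{unifirrep} to the vector $\tau$ with $\tau_{[g]}=\Delta_{[g]}/\|\Delta\|_{2,\infty}$ for $g\notin S$,'' and later write that ``\eqref{unifirrep} controls the projection from the non-signal blocks onto the signal blocks.'' Both statements are backwards: in \eqref{unifirrep} the test vector $\tau$ lives in $\RR^q$ (the \emph{signal} block) and the conclusion bounds the \emph{non-signal} blocks of $C_{21}(C_{11})^{-1}K\tau$. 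Your desired inequality is nonetheless true, but it requires the operator-norm duality
\[
\bigl\|C_{21}(C_{11})^{-1}\bigr\|_{(\lambda\text{-weighted }\ell_{2,\infty})\to(\lambda\text{-weighted }\ell_{2,\infty})}
=\bigl\|(C_{11})^{-1}C_{12}\bigr\|_{(\lambda\text{-weighted }\ell_{2,1})\to(\lambda\text{-weighted }\ell_{2,1})},
\]
since $(C_{11})^{-1}$ is symmetric and $C_{12}=C_{21}'$. You allude to ``a duality/homogeneity argument'' but never set it up; once you do, your direct projection argument goes through and is arguably cleaner than the paper's KKT/Schur-complement route, while the paper's version has the advantage of using \eqref{unifirrep} exactly as stated without any transposition.
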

\begin{proof}
First note that with the above choice of $\lambda_g$ the Group Compatibility $(S, L)$ condition simplifies to
\begin{equation}\label{compatible2}
 \phi_{compatible} := \min_{\Delta \in \mathbb{R}^p \backslash \{\mathbf{0}\}} \left \{ \frac{\sqrt{q} \| X \Delta\| }{\sqrt{n} \displaystyle \sum_{g \in S} \sqrt{k_g} \| \Delta_{[g]} \| } : \displaystyle \sum_{g \notin S} \sqrt{k_g} \| \Delta_{[g]}\|  \le L \displaystyle \sum_{g \in S} \sqrt{k_g} \| \Delta_{[g]} \| \right \} > 0
\end{equation}

Also, the uniform irrepresentable condition guarantees that there exists
$0< \eta<1$ such that $\forall \tau \in \mathbb{R}^q$ with $\| \tau \|_{2, \infty} = \displaystyle \max_{1 \le g \le s} \| \tau_{[g]} \|_2 \le 1$, we have,
\[
\frac{1}{\sqrt{k_g}} \left \|  \left[ C_{21} \left( C_{11}\right)^{-1} K^{0} \tau \right]_{[g]} \right \|_2 < 1-\eta ~\forall g \notin S
\]
Here $K^{0} = K/\lambda$ is a $q \times q$ block diagonal matrix with $s$ diagonal blocks $\sqrt{k_1} \, \mathbf{I}_{k_1 \times k_1}, \ldots, \sqrt{k_s} \, \mathbf{I}_{k_s \times k_s}$.
Define
\begin{equation}\label{minbeta}
\Delta^{0}:= \displaystyle \argmin_{\Delta \in \mathbb{R}^p} \left \{  \frac{1}{2n} \| \mathbf{X}\Delta \|^2_2 :~~\displaystyle \sum_{g \in S} \sqrt{k_g} \| \Delta_{[g]} \|_2 = 1,~\displaystyle \sum_{g \notin S} \sqrt{k_g} \| \Delta_{[g]} \|_2 \le L \right \}
\end{equation}
Note that $\frac{1}{n} \| \mathbf{X} \Delta^{0}\|_2^2 = \phi^{2}_{compatible}/q$, and introduce two Lagrange multipliers $\lambda$ and $\lambda'$ corresponding to the equality and inequality constraints for solving the optimization problem in \eqref{minbeta}. Also, partition $\Delta^{0} = \left[ {\Delta}^0_{(1)}: {\Delta}^0_{(2)}\right]$ and $\mathbf{X}= \left[ \mathbf{X}_{(1)}: \mathbf{X}_{(2)}\right]$ into signal and nonsignal parts as in \eqref{signonsigpartn}. The first $q$ linear equations of the KKT conditions imply that there exists $\tau^0 \in \mathbb{R}^q $  such that
\begin{eqnarray}
C_{11} \Delta^0_{(1)} + C_{12} \Delta^{0}_{(2)} &=& \lambda K^{0} \tau^{0} \label{kkt_compatible}
\end{eqnarray}
and, for every $g \in S$,
\begin{eqnarray*}
\tau^{0}_{\gi} &=& D(\Delta^0_{[g]}) \mbox{ if } \Delta^0_\gi \ne \mathbf{0}\\
\| \tau^0_\gi \| _2 &\le& 1 \mbox{ if } \Delta^0_\gi = \mathbf{0}
\end{eqnarray*}
It readily follows that ${(\tau^0)}^T K^{0} \Delta^0_{(1)} = \displaystyle \sum_{g \in S} \sqrt{k_g} \| \Delta^0_\gi \|_2 = 1$.\\
Multiplying both sides of \eqref{kkt_compatible} by $(\Delta^0_{(1)})^T$ we get
\begin{equation}\label{eq1}
{\left(\Delta^0_{(1)}\right)}^T C_{11} \Delta^0_{(1)} + {\left(\Delta^0_{(1)}\right)}^TC_{12} \Delta^{0}_{(2)} = \lambda
\end{equation}
Also, \eqref{kkt_compatible} implies
\begin{equation}\label{eq2}
 \Delta^0_{(1)} + \left(C_{11} \right)^{-1} C_{12} \Delta^{0}_{(2)} = \lambda\left(C_{11} \right)^{-1}  K^{0} \tau^{0}
\end{equation}
Multiplying both sides of the equation by $\left(K^{0} \tau^{0}\right)^{T} = \left( \tau^{0}\right)^{T} K^{0}$ we obtain
\begin{equation}\label{eq3}
1 = -\left(\tau^{0} \right)^{T} K^{0} \left(C_{11} \right)^{-1} C_{12} \Delta^{0}_{(2)} + \lambda \left(K^{0} \tau^{0}\right)^{T} \left(C_{11} \right)^{-1} \left(K^{0} \tau^{0}\right)
\end{equation}
Note that the absolute value of the first term,
\begin{eqnarray}\label{eq4}
\left|  \displaystyle \sum_{g \notin S} {\left(\Delta^{0}_\gi \right)}^{T} \left[ C_{21} {\left(C_{11}\right)}^{-1} K^{0} \tau^0 \right]_\gi   \right|,
\end{eqnarray}
is bounded above by
\begin{eqnarray}\label{eq4.5}
(1-\eta) \left( \displaystyle \sum_{g \notin S} \sqrt{k_g} \|\Delta^0_{[g]}\|_2 \right) \le (1-\eta)L
\end{eqnarray}
by virtue of the uniform irrepresentable condition and the Cauchy-Schwartz inequality.\\
Assuming the minimum eigenvalue of $C_{11}$, i.e.,  $\Lambda_{min} \left( C_{11}\right) $, is positive and considering $\| K^{0} \tau^{0} \|_2 \le \sqrt{q}$, the second term is at most $ \lambda \,{q}/\Lambda_{min} \left( C_{11}\right) $. So \eqref{eq3} implies
\begin{equation}\label{eq5}
1 \le (1 -\eta) L + \frac{\lambda q}{\Lambda_{min} \left( C_{11}\right)}
\end{equation}
In particular, $\lambda \ge \Lambda_{min} \left( C_{11}\right)  \left(1 - (1-\eta)L \right)/q $ is positive whenever $L < 1/ (1-\eta)$.\\
Next, multiply both sides of \eqref{eq2} by $( \Delta^0_{(2)})^{T} C_{21}$ to get
\begin{equation}\label{eq6}
\left( \Delta^0_{(2)}\right)^{T} C_{21} \Delta^0_{(1)} + \left( \Delta^0_{(2)}\right)^{T} C_{21}\left(C_{11} \right)^{-1} C_{(12)} \Delta^{0}_{(2)} = \lambda \left( \Delta^0_{(2)}\right)^{T} C_{21}\left(C_{11} \right)^{-1}  K^{0} \tau^{0}
\end{equation}
Using the upper bound in \eqref{eq4.5}, the right hand side is at least $-\lambda (1-\eta)L$.\\
Also a simple consequence of the block inversion formula of the non-negative definite matrix $C$ guarantees that the matrix $C_{22} - C_{21} \left( C_{11}\right)^{-1} C_{12} \mbox {}$  is non-negative definite. Hence,
\begin{eqnarray*}
&~&\left( \Delta^0_{(2)}\right)^{T}\left[C_{22} -  C_{21}\left(C_{11} \right)^{-1} C_{12}\right] \Delta^{0}_{(2)} \ge 0\\
& \mbox{ and } & \left( \Delta^0_{(2)}\right)^{T} C_{22}  \Delta^{0}_{(2)} \ge \left( \Delta^0_{(2)}\right)^{T}  C_{21}\left(C_{11} \right)^{-1} C_{12}\Delta^{0}_{(2)}
\end{eqnarray*}
Putting all the pieces together we get
\begin{eqnarray*}
\phi^2_{compatible}/q &=& \frac{1}{n} \| \mathbf{X} \Delta^0\|_2^2 \\
&=& {\Delta^0_{(1)}}^{T} C_{11} \Delta^0_{(1)} + 2 {\Delta^0_{(2)}}^{T} C_{21} \Delta^0_{(1)} + {\Delta^0_{(2)}}^{T} C_{22} \Delta^0_{(2)}\\
&=& \lambda + {\Delta^0_{(2)}}^{T} C_{21} \Delta^0_{(1)} + {\Delta^0_{(2)}}^{T} C_{22} \Delta^0_{(2)}  \mbox{ {, by \eqref{eq1}}} \\
&\ge&  \lambda - \lambda(1-\eta)L  \mbox{ , by \eqref{eq6}} \\
&=& \lambda(1-(1-\eta)L)
\end{eqnarray*}
Plugging in the lower bound for $\lambda$ we obtain the result; namely,
\[
\phi^2_{compatible} = \Lambda_{min} (C_{11}) \left(1 - (1-\eta)L \right)^2 > 0
\]
for any $L < \frac{1}{1-\eta}$.
\end{proof}

In this section we investigate the necessity of irrepresentable assumptions for direction consistency of group lasso estimates. To this end we first introduce the notion of weak irrepresentability.

For a $q$-dimensional vector $\tau$ define the stacked direction vector
$\underbrace{\tilde{D}(\mathbf{\tau})}_{q \times 1} = [\underbrace{D({\mathbf{\tau}}_{[1]})'}_{k_1 \times 1}, \ldots,  \underbrace{D({\mathbf{\tau}}_{[s]})'}_{k_s \times 1} ]'$.\\
\textbf{Weak Irrepresentable Condition} is satisfied if  
\begin{equation} \label{weakirrep}
\frac{1}{\lambda_g}\left \| \left[ C_{21} {\left(C_{11}\right)}^{-1} K \tilde{D}(\beta^0_{(1)})  \right]_{[g]} \right \| \le 1, ~ \forall g \notin S = \{ 1, \ldots, s\}
\end{equation}

We argue the necessity of weak irrepresentable condition for group sparsity selection and direction consistency under two regularity conditions on the design matrix, as $n,\, p \rightarrow \infty$:\\
\textbf{(A1)} The minimum eigenvalue of the signal part of the Gram matrix, viz. $\Lambda_{min} (C_{11})$, is bounded away from zero.\\
\textbf{(A2)} The matrices $C_{21}$ and $C_{22}$ are bounded above in spectral norm. 

As in the last proposition, we set $\lambda_g = \lambda \, \sqrt{k_g}$ and $K^0 = K/\lambda$.
Suppose that the weak irrepresentable condition does not hold, i.e., for some $g \notin S$  and $\xi > 0$, we have,
\[
\frac{1}{\sqrt{k_g}}\left \| \left[ C_{21} {\left(C_{11}\right)}^{-1} K^{0} \tilde{D}(\beta^0_{(1)})  \right]_{[g]} \right \| > 1+\xi
\]
for infinitely many $n$. Also suppose that there exists a sequence of positive reals $\delta_n \rightarrow 0$ such that the event
\[
E_n := \{ \| D(\mathbf{\hat{\beta}}_{[g]}) - D(\mathbf{\beta}_{[g]}) \| _2 < \delta_n,~\forall g \in S, \mbox{ and } \hat{\beta}_{[g]} = \mathbf{0} \, \forall \, g \notin S \}
\]
satisfies $\mathbb{P} (E_n) \rightarrow 1$ as $p, \, n \rightarrow \infty$.

Note that for large enough $n$ so that $\delta_n < \min_g \| D({\beta_{[g]}}) \|$, we have $\hat{\beta}_{[g]} \neq \mathbf{0}, \, \forall \, g \in S$ on the event $E_n$.

Then, as in the proof of Theorem \ref{selectconsist}, we have, on the event $E_n$,
\begin{eqnarray}
&~& \hat{\mathbf{u}} = {\left( C_{11}\right)}^{-1} \left[ \frac{1}{\sqrt{n}} {Z}_{(1)} - {\lambda} K^0 \tilde{D}(\hat{\beta}_{(1)}) \right]\label{eqn1} \\
& \mbox{and} & \frac{1}{n} \left \| \left[ {\mathbf{X}_{(2)}}^T (\epsilon - \mathbf{X}_{(1)} \hat{{u}}) \right]_{[g]} \right \| \le \lambda \sqrt{k_g},~\forall g \notin S \label{eqn2}
\end{eqnarray}
Substituting the value of $\hat{u}$ from \eqref{eqn1} in \eqref{eqn2}, we have, on the event $E_n$,
\begin{equation*}
\frac{1}{\sqrt{n}} \left \|  \left[  \mathbf{Z}_{(2)} - C_{21} {\left(C_{11} \right)}^{-1} \mathbf{Z}_{(1)} + {\lambda}{\sqrt{n}}C_{21} {\left( C_{11}\right)}^{-1} K^0 \tilde{D}(\hat{\beta}_{(1)}) \right]_{[g]} \right \| \le  {\lambda}\sqrt{k_g},
\end{equation*}
which implies that
\begin{eqnarray} \nonumber
&~& \left \|  \left[ {Z}_{(2)} - C_{21} \left( C_{11}\right)^{-1} {Z}_{(1)} \right]_{[g]} \right \| \label{eqn3.1} \hspace{2cm} \\
&~& \hspace{2cm} \ge {\lambda} \, \sqrt{n}\,  \sqrt{k_g} \left[ \frac{1}{\sqrt{k_g}} \left\| \left[ C_{21} {\left( C_{11}\right)}^{-1} K^0 \tilde{D}(\hat{\beta}_{(1)}) \right]_{[g]} \right\|  - 1 \right].
\end{eqnarray}
Now note that for large enough $n$, if $\| C_{21} \|$ is bounded above, direction consistency
guarantees that the expression on the right is larger than
\[
\frac{1}{2} \, {\lambda} \,  \sqrt{n}\,  \sqrt{k_g} \left[  \frac{1}{\sqrt{k_g}} \left\| \left[ C_{21} {\left( C_{11}\right)}^{-1} K^0 \tilde{D}({\beta}_{(1)}) \right]_{[g]} \right\|  - 1 \right]
\]
which in turn is larger than $\frac{1}{2} \, {\lambda} \,  \sqrt{n}\,  \sqrt{k_g} \, \xi$, in view of the weak irrepresentable condition.

This contradicts  $\PP (E_n) \rightarrow 1 $, since the left-hand side of \eqref{eqn3.1} corresponds to the norm of a zero mean Gaussian random variable with bounded variance structure $\left[ C_{22} - C_{21} (C_{11})^{-1} C_{12} \right]_{[g] [g]}$  while ${\lambda} \,  \sqrt{n}\,  \sqrt{k_g}$ diverges with $\sqrt{\log \, G}$.


\section{Thresholding Group Lasso Estimates.}\label{app_thres}
\begin{proof}[Proof of Theorem \ref{propthres}]
We use the notations developed in the proof of Proposition~\ref{compatible2consist}. First note that, $(ii)$ follows directly from Theorem \ref{selectconsist}. For $(i)$, since the falsely selected groups are present after the initial thresholding, we get $\| \hat{\beta_{[g]}} \| > 4 \lambda$ for every such group. Next, we obtain an upper bound for the number of such groups. Specifically, denoting $\Delta = \hat{\beta} - \beta^0$, we get
\begin{equation}\label{threseq1}
\left| \hat{S} \backslash S \right| \le \frac{\|\hat{\beta}_{S^c}\|_{2,1}}{4 \lambda} = \frac{\sum_{g \notin S} \| \Delta_{[g]} \|}{4 \lambda}.
\end{equation}

Next, note that from the sparsity oracle inequality \eqref{keyineq}, the following holds on the event $\calA$,
\[
\sum_{g \notin S} \|\Delta_{[g]} \| \le 3 \sum_{g \in S} \|\Delta_{[g]} \|
\]
It readily follows that
\[
4 \sum_{g \notin S} \| \Delta_{[g]} \| \le 3 \| \Delta \|_{2,1} \le \frac{48}{\phi^2} \, s \lambda
\]
where the last inequality follows from the $\ell_{2,1}$-error bound of \eqref{eql21}. Using this inequality together with \eqref{threseq1} gives the result.
\end{proof}

\vskip 0.2in
\bibliographystyle{asa}
\bibliography{biblio}

\end{document}